\documentclass[a4paper,UKenglish,cleveref,autoref,thm-restate]{lipics-v2021}

\pdfoutput=1 
\hideLIPIcs  

\title{CV-Rules: Serializability Verification of Concurrency Control Protocols via Explicit Transaction Ordering}

\titlerunning{CV-Rules: Serializability Verification via Explicit Transaction Ordering}

\author{Takashi Hoshino}{Cybozu Labs, Inc., Japan}{hoshino@labs.cybozu.co.jp}{https://orcid.org/0000-0003-4922-7479}{}
\author{Shigeo Mitsunari}{Cybozu Labs, Inc., Japan}{mitsunari@labs.cybozu.co.jp}{}{}
\author{Takashi Kambayashi}{NAUTILUS Technologies, Inc., Japan}{kambayashi@nautilus-technologies.com}{}{}
\author{Ryoji Kurosawa}{NAUTILUS Technologies, Inc., Japan}{kurosawa@nautilus-technologies.com}{}{}
\author{Sho Nakazono}{LY Corporation, Japan}{shnakazo@lycorp.co.jp}{}{}

\authorrunning{Hoshino et al.}
\Copyright{Takashi Hoshino, Shigeo Mitsunari, Takashi Kambayashi, Ryoji Kurosawa, Sho Nakazono}

\ccsdesc[500]{Information systems~Database transaction processing}
\ccsdesc[300]{Theory of computation~Logic and verification}
\ccsdesc[300]{Software and its engineering~Formal software verification}

\keywords{Serializability, Concurrency Control, Transaction Processing, Formal Verification, Theorem Proving}

\category{} 

\relatedversion{} 

\supplementdetails[cite={Hosh2026-CVRules-Artifact}]{Software}{https://doi.org/10.5281/zenodo.20755695}

\acknowledgements{} 

\nolinenumbers 

\EventEditors{}
\EventNoEds{2}
\EventLongTitle{}
\EventShortTitle{}
\EventAcronym{}
\EventYear{2026}
\EventDate{}
\EventLocation{}
\EventLogo{}
\SeriesVolume{}
\ArticleNo{}

\usepackage{cite} 
\usepackage{tabulary}
\usepackage{tikz}

\newcommand{\vf}{\mathit{vf}}
\newcommand{\Crule}{\textsf{C-rule}}
\newcommand{\Vrule}{\textsf{V-rule}}
\newcommand{\CVrules}{\textsf{CV-rules}}

\begin{document}

\maketitle

\begin{abstract}
We present CV-rules, an alternative characterization of serializability
in which a transaction order constructed by a protocol satisfies two
per-read conditions, C-rule (Causality) and V-rule (View
Consistency), that constrain the reads-from relation and
competing writers. While classical Multi-Version Serialization
Graph (MVSG) reasoning characterizes serializability via its acyclicity,
our approach requires explicit order construction, enabling
direct proofs that build on the protocol's own mechanisms.
We prove that CV-rules, serializability, and MVSG acyclicity are all equivalent.
Moreover, the C/V separation reveals that serializability is polynomial-time decidable for any fixed bound on the width of the order forced by C-rule.
We verify five protocols: Two-Phase Locking, Multi-Version Timestamp
Ordering, Serial Safety Net (SSN), Aria, and SnapChain. For SSN and Aria, whose original papers defined
only certification conditions, we identify explicit transaction orders
arising from their mechanisms; we also prove that Aria's unique-write constraint is
unnecessary for serializability. SnapChain, in contrast, is designed directly from CV-rules,
enforcing V-rule by construction.
All results except the complexity bounds are mechanized in Lean with no additional axioms and no admitted goals.
\end{abstract}

\section{Introduction}
\label{sec:introduction}

Serializability, the requirement that a concurrent execution be equivalent to some serial one, is the canonical correctness condition for transactional concurrency. Originating in database theory~\cite{Papa1979,Bern1983}, it underlies correctness reasoning across transaction processing, replicated data stores, and transactional memory. Verifying that concurrency control protocols guarantee serializability is a fundamental and recurring problem.

The classical approach to serializability verification uses the Multi-Version Serialization Graph (MVSG)~\cite{Bern1983}. A history is serializable if and only if some MVSG for it is acyclic. While theoretically elegant and well suited to protocols that directly construct dependency graphs, MVSG-based reasoning presents challenges for protocols that establish serializability through other mechanisms. The verification target is a global property (acyclicity of the entire dependency graph), and the serial order is a byproduct derived via topological sort, not an input specified by the protocol. Since order identification is not required, correctness proofs can establish acyclicity by contradiction without ever specifying the serial order. For protocols based on locks, timestamps, or certification conditions, these features of MVSG-based reasoning can obscure how protocol mechanisms actually establish the required ordering. This is particularly evident in recent protocols such as SSN~\cite{Wang2017-SSN} and Aria~\cite{Lu2020b-Aria}, whose original correctness proofs establish serializability via graph acyclicity without identifying which transaction order the protocol enforces.

We propose \CVrules{}, a characterization of serializability that operates on a transaction order \emph{explicitly constructed by the protocol}. Whereas MVSG identifies serializability with a global property (graph acyclicity), \CVrules{} identify it with two local per-read conditions: \Crule{} (Causality) constrains the reads-from relation, and \Vrule{} (View Consistency) constrains competing writers of the same data item.

\CVrules{} make protocol mechanisms visible. For SSN, we find that the protocol constructs a logical timestamp derived from anti-dependency analysis, a construction not explicit in the original work. For Aria, we show that the order combines a batch-local identifier with dependency-based reordering, and that Aria's unique-write constraint is unnecessary for serializability. The C/V separation further reveals the complexity structure of serializability: \Vrule{}'s disjunctive constraints drive its NP-hardness, and for any fixed bound on the width of the order forced by \Crule{}, the problem becomes polynomial-time decidable.

More broadly, \CVrules{} continue the declarative line of work on transactional consistency~\cite{Cero2015,Croo2017-Seeing,Xion2020}, which characterizes serializability and weaker consistency levels via axioms relating visibility, arbitration, or execution-test structures to an ordered history. Our framework specializes to serializability while allowing partial transaction orders, enabling direct application to concrete protocol mechanisms.

This paper makes the following contributions:
\begin{itemize}
\item \textbf{Theory.} We characterize serializability via two local per-read conditions (\Crule{} and \Vrule{}) and prove their equivalence with MVSG acyclicity. We also show that serializability is polynomial-time decidable for any fixed bound on the width of the order forced by \Crule{}.

\item \textbf{Protocol verification.} We verify five protocols (2PL, MVTO, SSN, Aria, and SnapChain) by explicitly constructing their transaction orders and proving \CVrules{}. This includes identifying implicit orders in existing protocols and, with SnapChain, using \CVrules{} as a design principle.

\item \textbf{Mechanization.} All theorems are mechanically verified in Lean (${\sim}11.2\text{K}$ lines across 105 modules) with no additional axioms and no admitted goals; only the complexity bounds are argued informally.
\end{itemize}

The remainder of the paper is organized as follows.
\cref{sec:background} presents background on multi-version histories and MVSG.
\cref{sec:cv-rules} defines \CVrules{}, proves their equivalence to serializability, and establishes bounded-width tractability.
\cref{sec:mvsg-equivalence} establishes the equivalence between \CVrules{} and MVSG and contrasts the two approaches.
\cref{sec:case-studies} applies \CVrules{} to verify protocols.
\cref{sec:mechanization} discusses the Lean mechanization.
\cref{sec:related-work} surveys prior work, and \cref{sec:conclusion} concludes.

\section{Background}
\label{sec:background}

This section presents the multi-version transaction model, defines serializability via serial execution, and recalls the classical MVSG characterization.

\subsection{Multi-Version Histories}
\label{sec:multi-version-histories}

A \emph{transaction} is a unit of work that reads and writes data items. We denote a read of data item $x$ by transaction $t_i$ as $r_i(x)$, and a write as $w_i(x)$. Following the \emph{two-step model}~\cite{Papa1984}, also adopted by Bernstein et al.~\cite{Bern1983}, each transaction reads and writes each data item at most once, and we assume transactions do not read from themselves (no self-reads). These restrictions are standard normal forms in multi-version serializability theory; histories with repeated accesses or self-reads can be reduced to this form without affecting serializability.

Formally, a history is defined as follows:

\begin{definition}[History]
A \emph{history} $h$ consists of:
\begin{itemize}
\item A finite set $D$ of data items.
\item A finite set $T$ of transactions.
\item For each transaction $t \in T$, a set of operations $\mathit{ops}(t) \subseteq \{r(x), w(x) \mid x \in D\}$ specifying which data items $t$ reads and writes.
\item A distinguished \emph{initialization transaction} $t_0 \notin T$ with $\mathit{ops}(t_0) = \{w(x) \mid x \in D\}$, i.e., $t_0$ writes all data items.
\item A \emph{version function} $\vf$ that, for each $t_i \in T$ and each $x \in D$ with $r(x) \in \mathit{ops}(t_i)$, specifies a unique transaction $\vf(x, t_i) \in T \cup \{t_0\}$ from which $t_i$ reads $x$.
\end{itemize}
\end{definition}

The version function captures the reads-from relation: $\vf(x, t_i) = t_j$ means transaction $t_i$ reads the version of $x$ written by $t_j$. Note that $\vf(x, t_i)$ is only defined when $r(x) \in \mathit{ops}(t_i)$, i.e., when $t_i$ actually reads $x$. We require that the source $t_j = \vf(x, t_i)$ actually writes $x$, i.e., $w(x) \in \mathit{ops}(t_j)$, and that $t_j \ne t_i$.

We call this a \emph{multi-version history} following the terminology of Bernstein et al.~\cite{Bern1983} (see also~\cite{Papa1984}), though our definition differs from the classical one. The key observation, also made by Bernstein et al.\ in their multi-version serializability theory, is that operation ordering within or across transactions does not affect serializability; only the version function matters. Accordingly, our definition omits operation orderings entirely, and the version function abstracts over both single-version and multi-version storage. This model deliberately abstracts away abort handling, operation-level scheduling, and real-time ordering: serializability here is the multi-version notion of Bernstein et al., determined solely by the version function.

\begin{example}
\label{ex:simple-history}
Consider three transactions and two data items:
\begin{align*}
t_1 &: w_1(x),\, w_1(y) \\
t_2 &: r_2(x),\, w_2(y) \\
t_3 &: r_3(x),\, r_3(y)
\end{align*}
A possible version function: $\vf(x, t_2) = t_1$, $\vf(x, t_3) = t_1$, $\vf(y, t_3) = t_2$. This indicates that both $t_2$ and $t_3$ read $x$ from $t_1$, while $t_3$ reads $y$ from $t_2$.
\end{example}

\subsection{Serializability via Serial Execution}
\label{sec:serializability}

Serializability requires that a concurrent execution be equivalent to some serial execution. We define this in terms of the version function.

Informally, a \emph{serial execution} runs transactions one at a time in some total order, where each transaction reads the most recent version of each data item, that is, the version written by the latest preceding writer. Given a total order $\prec$ on transactions, this naturally determines a version function: $\vf(x, t_i)$ is the maximum $t_j \prec t_i$ such that $t_j$ writes $x$. The following definition formalizes this idea.

\begin{definition}[Serializability]
\label{def:serializable}
A history $h$ is \emph{serializable} if there exists a total order $\prec$ on $T \cup \{t_0\}$ (with $t_0$ as the minimum) such that for every transaction $t_i \in T$ and every data item $x$ that $t_i$ reads:
\[
\vf(x, t_i) = \max_{\prec}\{t_j \mid t_j \prec t_i \text{ and } w(x) \in \mathit{ops}(t_j)\}
\]
(The right-hand side is well-defined: $t_0$ writes all data items and precedes every $t_i$ under~$\prec$, so $t_0$ is always in the candidate set.)
In other words, the version function matches what would result from executing transactions serially in that order.
\end{definition}

\begin{example}
Continuing \cref{ex:simple-history}, the serial order $t_0 \prec t_1 \prec t_2 \prec t_3$ produces exactly the version function given: $t_2$ and $t_3$ both read $x$ from the most recent writer $t_1$, and $t_3$ reads $y$ from $t_2$ (the most recent writer of $y$ before $t_3$). Thus this history is serializable.
\end{example}

\subsection{MVSG Characterization}
\label{sec:mvsg}

The Multi-Version Serialization Graph (MVSG)~\cite{Bern1983} provides a classical characterization of serializability. The construction requires a \emph{version order} $\ll$, which is a family $(\ll_x)_{x \in D}$ where each $\ll_x$ is a total order on the transactions that write $x$ (including $t_0$).

\begin{definition}[MVSG]
Given a history $h$ and a version order $\ll$, the \emph{Multi-Version Serialization Graph} $\mathit{MVSG}(h, \ll)$ is a directed graph where:
\begin{itemize}
\item Nodes are transactions in $T \cup \{t_0\}$.
\item Edges capture dependencies:
\begin{itemize}
\item \textbf{Initial}: $t_0 \to t_i$ for all $t_i \in T$.
\item \textbf{Reads-from}: $t_j \to t_i$ if $\vf(x, t_i) = t_j$ for some $x \in D$.
\item \textbf{Version-order}: $t_k \to t_j$ if $\vf(x, t_i) = t_j$, $w(x) \in \mathit{ops}(t_k)$, $t_k \notin \{t_i, t_j\}$, and $t_k \ll_x t_j$.
\item \textbf{Anti-dependency}: $t_i \to t_k$ if $\vf(x, t_i) = t_j$, $w(x) \in \mathit{ops}(t_k)$, $t_k \notin \{t_i, t_j\}$, and $t_j \ll_x t_k$.
\end{itemize}
\end{itemize}
\end{definition}

Intuitively, version-order edges enforce that older writers precede newer ones, while anti-dependency edges ensure that readers precede later writers.

The fundamental theorem of MVSG states:

\begin{theorem}[MVSG Theorem~\cite{Bern1983}]
\label{thm:mvsg}
A history is serializable if and only if there exists a version order $\ll$ such that $\mathit{MVSG}(h, \ll)$ is acyclic.
\end{theorem}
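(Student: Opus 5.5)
The proof has two directions. In each, we pass between a total order $\prec$ witnessing \cref{def:serializable} and a version order $\ll$ whose MVSG is acyclic.

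\textbf{($\Rightarrow$)} Suppose $\prec$ witnesses serializability. Define $\ll_x$ as the restriction of $\prec$ to the writers of $x$; this includes $t_0$, which is the $\prec$-minimum. We show every edge of $\mathit{MVSG}(h,\ll)$ goes forward in $\prec$, so the graph is acyclic because $\prec$ is a strict total order. There are four edge kinds.
\begin{itemize}
\item \emph{Initial edges} go forward since $t_0$ is the minimum.
\item \emph{Reads-from edges}: $\vf(x,t_i)=t_j$ is the $\prec$-maximum of the writers preceding $t_i$, so $t_j \prec t_i$.
\item \emph{Version-order edges} $t_k \to t_j$ have $t_k \ll_x t_j$, i.e.\ $t_k \prec t_j$, by definition.
\item \emph{Anti-dependency edges} $t_i \to t_k$ have $t_j \ll_x t_k$, so $t_j \prec t_k$ with $t_k$ a writer of $x$. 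If $t_k \prec t_i$, then $t_k$ would be a writer of $x$ preceding $t_i$ and larger than $t_j$, contradicting maximality of $t_j$. Also $t_k \neq t_i$. Hence $t_i \prec t_k$.
\end{itemize}

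\textbf{($\Leftarrow$)} Suppose $\mathit{MVSG}(h,\ll)$ is acyclic for some $\ll$. Since the node set is finite, take a topological sort to get a total order $\prec$ extending the edge relation. The initial edges force $t_0$ to be the minimum. Fix $t_i$ reading $x$ with $\vf(x,t_i)=t_j$. The reads-from edge gives $t_j \prec t_i$, so $t_j$ is a candidate in the set of \cref{def:serializable}. It remains to show that every other writer $t_k$ of $x$ with $t_k\prec t_i$ satisfies $t_k \prec t_j$. Let $t_k$ be a writer of $x$ with $t_k\notin\{t_i,t_j\}$. Since $\ll_x$ is total on the writers of $x$, there are two cases.
\begin{itemize}
\item If $t_k \ll_x t_j$, the version-order edge $t_k\to t_j$ gives $t_k\prec t_j$.
\item If $t_j \ll_x t_k$, the anti-dependency edge $t_i \to t_k$ gives $t_i \prec t_k$, so $t_k$ is not a candidate.
\end{itemize}
The case $t_k = t_i$ needs no separate treatment: $t_i \not\prec t_i$, so $t_i$ is never a candidate for itself. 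Therefore $t_j$ is the $\prec$-maximum candidate, and $\prec$ witnesses serializability.

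The only delicate point is the bookkeeping of side conditions. One must check that $t_k\notin\{t_i,t_j\}$ exactly matches the edge definitions, and that self-reads are excluded by the well-formedness condition $\vf(x,t_i)\neq t_i$. One must also justify that a finite acyclic digraph admits a linear extension, which is the standard topological sort. I expect no substantive obstacle beyond these checks.
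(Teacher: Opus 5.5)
Your proof is correct; the paper does not prove this theorem itself (it cites Bernstein et al.) but obtains it by composing \cref{thm:cv-rules} with \cref{thm:cv-mvsg-equivalence}, and your argument is essentially that composition with the intermediate partial order (\CVrules{}) collapsed. Your case split on $\ll_x$ in the ($\Leftarrow$) direction is the paper's \Vrule{} verification from version-order and anti-dependency edges, followed by its maximality argument after linear extension. Your ($\Rightarrow$) direction is the paper's necessity argument for \cref{thm:cv-rules} followed by its sufficiency argument for \cref{thm:cv-mvsg-equivalence}, with no Szpilrajn extension needed since the order is already total.
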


When the MVSG is acyclic, a topological sort yields a serial order. The MVSG characterization is protocol-agnostic: it applies regardless of how the protocol operates. However, the verification target is a global property (acyclicity of the entire graph), and the serial order is derived from this graph rather than given as input. For protocols that construct a transaction order through their own mechanisms, MVSG-based reasoning leads to two complications:

\noindent\textbf{Indirect ordering.} The serial order is derived from the graph structure via topological sort; it does not appear as an explicit input. When a protocol constructs an order directly, the correspondence between that order and the topological sort of the MVSG must be established separately.

\noindent\textbf{Indirect proofs.} Since the MVSG framework does not require order identification, correctness proofs can proceed by contradiction (assuming a cycle exists) without ever specifying the serial order. When the order a protocol enforces is non-obvious, proofs naturally follow this indirect strategy, establishing correctness without revealing which order the protocol constructs.

\section{\CVrules{}}
\label{sec:cv-rules}

This section introduces \CVrules{}, our characterization of serializability based on explicit transaction ordering.

\subsection{Definition}
\label{sec:cv-rules-definition}

\subsubsection{Transaction Order}

We extend the notion of history with an explicit transaction order:

\begin{definition}[Ordered History]
\label{def:ordered-history}
An \emph{ordered history} $(h, \prec)$ consists of a history $h$ and a strict partial order $\prec$ (irreflexive and transitive) on the transactions $T \cup \{t_0\}$, where $t_0$ is the minimum element: $t_0 \prec t$ for all $t \in T$.
\end{definition}

The order $\prec$ represents the intended serialization order. A key feature is that $\prec$ need only be a \emph{partial order}: we do not require all transactions to be comparable.

\subsubsection{\Crule{} (Causality)}

\Crule{} captures the basic causality constraint: if a transaction reads data from another transaction, the writer must precede the reader in the serialization order.

\begin{definition}[\Crule{}]
\label{def:c-rule}
An ordered history $(h, \prec)$ satisfies \Crule{} if for every transaction $t_i \in T$ and every data item $x$ that $t_i$ reads from $t_j$ (i.e., $\vf(x, t_i) = t_j$), we have $\boldsymbol{t_j \prec t_i}$.
\end{definition}

Intuitively, in any serial execution, a transaction can only read data that was written before it executed. \Crule{} directly encodes this causal dependency.

\subsubsection{\Vrule{} (View Consistency)}

\Crule{} requires the source to precede the reader, but not to be the \emph{most recent} writer. \Vrule{} fills this gap by constraining how other writers of the same item are positioned.

\begin{definition}[\Vrule{}]
\label{def:v-rule}
An ordered history $(h, \prec)$ satisfies \Vrule{} if for every transaction $t_i \in T$, every data item $x$ that $t_i$ reads from $t_j$ (i.e., $\vf(x, t_i) = t_j$), and every other transaction $t_k \notin \{t_i, t_j\}$ that writes $x$, we have $\boldsymbol{t_k \prec t_j}$ or $\boldsymbol{t_i \prec t_k}$.
\end{definition}

\begin{figure}[t]
\centering
\begin{tikzpicture}[
  dot/.style={circle, fill, inner sep=1.5pt},
  odot/.style={circle, draw, inner sep=1.5pt},
  lbl/.style={font=\footnotesize},
  arr/.style={->, semithick},
]

\draw[arr] (-0.3, 0) -- (9.5, 0) node[right, lbl] {$\prec$};

\fill[black!15] (3, -0.25) rectangle (6.5, 0.25);

\node[dot] (tj) at (3, 0) {};
\node[lbl, below=3pt] at (tj) {$t_j$};
\node[dot] (ti) at (6.5, 0) {};
\node[lbl, below=3pt] at (ti) {$t_i$};

\node[odot] (tk1) at (1.2, 0) {};
\node[lbl, below=3pt] at (tk1) {$t_k$};

\node[odot] (tk2) at (8.3, 0) {};
\node[lbl, below=3pt] at (tk2) {$t_k$};

\end{tikzpicture}
\caption{\Vrule{} illustrated.
  For a read $\vf(x, t_i) = t_j$,
  every competing writer $t_k$ of $x$
  must precede the source~$t_j$ or follow the reader~$t_i$.
  The shaded zone, including incomparable positions, is forbidden.}
\label{fig:v-rule}
\end{figure}

The intuition is that if $t_i$ reads $x$ from $t_j$, no other write to $x$ can interpose between $t_j$ and $t_i$ in the serialization order (\cref{fig:v-rule}). If $t_k$ wrote $x$ and appeared between $t_j$ and $t_i$, $t_i$'s view would break: serial execution would yield $t_k$'s version, not $t_j$'s. Note that since $\prec$ is a partial order, $t_k$ being concurrent with $t_j$ or $t_i$ (i.e., incomparable) is also prohibited; \Vrule{} requires an explicit ordering.

\begin{example}
Consider three transactions:
\begin{align*}
t_1 &: w_1(x) \\
t_2 &: w_2(x) \\
t_3 &: r_3(x) \text{ from } t_1
\end{align*}
\Vrule{} for the read by $t_3$ from $t_1$, considering the other writer $t_2$, requires:
$t_2 \prec t_1 \quad \text{or} \quad t_3 \prec t_2$.
This ensures $t_2$'s write does not appear between $t_1$ and $t_3$ in the serialization order. If $t_2 \prec t_1$, then $t_3$ correctly reads from $t_1$ (the most recent). If $t_3 \prec t_2$, then $t_2$'s write occurs after $t_3$ reads, so it does not break $t_3$'s view.
\end{example}

\subsubsection{Connection to MVTO}

\CVrules{} generalize the correctness conditions of the
Multi-Version Timestamp Ordering (MVTO) protocol~\cite{Reed1978-MVTO,Bern1983},
in which each transaction receives a timestamp and read/write
operations must respect the timestamp order.
MVTO's conditions are \Crule{}, \Vrule{}, and totality of
the timestamp order.
While the conditions themselves reference only local comparisons,
dropping totality makes \emph{sufficiency} non-trivial: one must
extend the partial order to a linear extension and use \Vrule{}'s
disjunctive structure to rule out intervening writers.
Necessity, in contrast, follows directly from any serial order
(\cref{thm:cv-rules}).
\CVrules{} thus characterize serializability and apply to
protocols whose mechanisms are substantially different from
timestamp ordering, as we demonstrate in \cref{sec:case-studies}.

\subsection{Main Theorem: Characterizing Serializability}
\label{sec:cv-rules-theorem}

We now state and prove the main theorem: \CVrules{} characterize serializability.

\begin{theorem}[\CVrules{} Equivalence]
\label{thm:cv-rules}
A history $h$ is serializable if and only if there exists a transaction order $\prec$ such that $(h, \prec)$ satisfies both \Crule{} and \Vrule{}.
\end{theorem}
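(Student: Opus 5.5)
We prove the two directions separately. Necessity is the easy one. Sufficiency needs a linear extension of $\prec$, and the real content is to check that \Vrule{} survives the extension.

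\textbf{Necessity ($\Rightarrow$).} Suppose $h$ is serializable, witnessed by a total order $\prec$ on $T \cup \{t_0\}$ with $t_0$ minimum as in \cref{def:serializable}. We take this same $\prec$ as the transaction order; a strict total order is in particular a strict partial order, so $(h,\prec)$ is an ordered history in the sense of \cref{def:ordered-history}.

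Fix a read $\vf(x,t_i)=t_j$. Then $t_j$ is the $\prec$-maximum writer of $x$ below $t_i$, so in particular $t_j \prec t_i$, which is \Crule{}.

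For \Vrule{}, let $t_k \notin\{t_i,t_j\}$ write $x$. By totality, either $t_i \prec t_k$, and we are done, or $t_k \prec t_i$. In the latter case $t_k$ lies in the candidate set whose maximum is $t_j$. Since $t_k \neq t_j$, we get $t_k \prec t_j$.

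\textbf{Sufficiency ($\Leftarrow$).} Suppose $(h,\prec)$ satisfies \Crule{} and \Vrule{} with $\prec$ a strict partial order having $t_0$ as minimum.

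First, extend $\prec$ to a strict total order $<$ on the finite set $T\cup\{t_0\}$ (a linear extension, e.g.\ by repeatedly removing a minimal element). Since $t_0 \prec t$ for all $t\in T$, $t_0$ remains the minimum under $<$.

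Next, fix a read $\vf(x,t_i)=t_j$; we show $t_j$ is the $<$-maximum writer of $x$ below $t_i$. It is a candidate: $t_j$ writes $x$ by the requirement on $\vf$, and $t_j \prec t_i$ by \Crule{}, hence $t_j < t_i$. Now let $t_k$ be any other candidate, so $t_k$ writes $x$, $t_k < t_i$, and $t_k \ne t_j$. Also $t_k \neq t_i$, since $t_k < t_i$. By \Vrule{}, either $t_k \prec t_j$ or $t_i \prec t_k$. The second option would give $t_i < t_k$, contradicting $t_k < t_i$ together with irreflexivity and transitivity of $<$. Hence $t_k \prec t_j$, so $t_k < t_j$. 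Therefore $t_j$ is the maximum, and $<$ witnesses serializability.

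\textbf{Main obstacle.} The only nontrivial point is the sufficiency step where \Vrule{}'s disjunction must be resolved in the extended order. This works because \Vrule{} forces $t_k$ to be comparable to $t_j$ or to $t_i$ already in $\prec$, so the arbitrary choices made in the linear extension can never place $t_k$ between $t_j$ and $t_i$. Existence of the linear extension itself is standard for finite sets. In a mechanization it would be obtained by induction on $|T|$, or from the library's linear-extension lemma.
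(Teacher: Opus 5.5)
Your proof is correct and follows the paper's argument. For necessity you reuse the serial order and read off \Crule{} and \Vrule{} from the maximality condition (you invoke totality explicitly where the paper rules out interposition). For sufficiency you take a linear extension of $\prec$ and use \Crule{} together with \Vrule{}'s disjunction to show $t_j$ is the most recent writer before $t_i$, which is the paper's route up to presentation.
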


\subsubsection{Proof Sketch}

\textbf{Sufficiency} (CV $\Rightarrow$ Serializable).
Given a partial order satisfying \Crule{} and \Vrule{}:
\begin{enumerate}
\item Apply Szpilrajn's extension theorem~(\cite{Szpi1930}; see also \cite[Theorem~10.2]{Schr2016-book}), explicitly via topological sorting, to extend the partial order to a total order. Since the extension preserves existing order relations, both rules carry over.
\item \Crule{} ensures $t_j \prec t_i$, and \Vrule{} then ensures that no other writer $t_k$ appears between them, so $t_j$ is the most recent writer before $t_i$. The version function thus matches serial execution.
\end{enumerate}

\noindent\textbf{Necessity} (Serializable $\Rightarrow$ CV).
Given a serializable history with serial order $\prec_s$:
\begin{enumerate}
\item Use $\prec_s$ as the serialization order.
\item \textbf{\Crule{}}: In serial execution, each read obtains the version from the most recent prior writer. Thus the source precedes the reader: \Crule{} holds.
\item \textbf{\Vrule{}}: For any read from $t_j$ by $t_i$ and any other writer $t_k$: if $t_j \prec t_k \prec t_i$, then $t_i$ would read $x$ from $t_k$, a contradiction. Hence $t_k \prec t_j$ or $t_i \prec t_k$: \Vrule{} holds.
\end{enumerate}

\begin{remark}[Complexity]
\label{rem:cv-rules-complexity}
Given an order~$\prec$, checking \CVrules{} runs in $O(|T|^2 \cdot |D|)$ time.
Determining whether \emph{any} satisfying order exists is NP-complete~\cite{Papa1979,Bern1983}, since by \cref{thm:cv-rules} this is equivalent to deciding serializability.
\end{remark}

\subsection{\Vrule{} Decomposition}
\label{sec:v-rule-decomposition}

\Vrule{} can be decomposed into two independent conditions.

\begin{theorem}[\Vrule{} Decomposition]
\label{thm:v-rule-decomposition}
Under \Crule{}, \Vrule{} is equivalent to the conjunction of:
\begin{enumerate}
\item \textbf{Endpoint comparability.}
For every read $\vf(x, t_i) = t_j$ and every other writer $t_k$
of~$x$, $t_k$ is comparable with both $t_i$ and $t_j$ under~$\prec$.
\item \textbf{No interposition.}
There is no such $t_k$ with $t_j \prec t_k \prec t_i$.
\end{enumerate}
\end{theorem}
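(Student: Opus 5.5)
We prove the two directions separately. Throughout, fix a read $\vf(x, t_i) = t_j$ and another writer $t_k \notin \{t_i, t_j\}$ of~$x$. By \Crule{} we have $t_j \prec t_i$, and we use this together with the strict partial order axioms of \cref{def:ordered-history}.

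\textbf{\Vrule{} $\Rightarrow$ decomposition.}
Assume \Vrule{}, so $t_k \prec t_j$ or $t_i \prec t_k$.

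\emph{Comparability.} We split on the disjunct.
If $t_k \prec t_j$, then $t_k$ is comparable with $t_j$ directly. Transitivity with $t_j \prec t_i$ gives $t_k \prec t_i$, so $t_k$ is also comparable with $t_i$.
If $t_i \prec t_k$, then $t_k$ is comparable with $t_i$ directly. Transitivity with $t_j \prec t_i$ gives $t_j \prec t_k$, so $t_k$ is also comparable with $t_j$.

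\emph{No interposition.} Suppose $t_j \prec t_k \prec t_i$. In the case $t_k \prec t_j$, combining with $t_j \prec t_k$ gives $t_k \prec t_k$, contradicting irreflexivity. In the case $t_i \prec t_k$, combining with $t_k \prec t_i$ gives $t_i \prec t_i$, again a contradiction.

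\textbf{Decomposition $\Rightarrow$ \Vrule{}.}
Assume endpoint comparability and no interposition.
\begin{itemize}
\item Comparability with $t_j$, together with $t_k \neq t_j$, yields $t_k \prec t_j$ or $t_j \prec t_k$. In the first case \Vrule{}'s first disjunct holds and we are done.
\item Otherwise $t_j \prec t_k$. Comparability with $t_i$, together with $t_k \neq t_i$, yields $t_k \prec t_i$ or $t_i \prec t_k$.
\item If $t_k \prec t_i$, then $t_j \prec t_k \prec t_i$, which contradicts no interposition. Hence $t_i \prec t_k$, which is \Vrule{}'s second disjunct.
\end{itemize}

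\textbf{Where the hypotheses are used.} The argument is purely order-theoretic, so no step is a real obstacle. The one point needing care is to read ``comparable'' as $a \prec b \lor b \prec a$ for distinct elements. Distinctness is guaranteed by $t_k \notin \{t_i, t_j\}$; without it comparability could hold only through equality.

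\Crule{} is used only in the forward direction, to transfer comparability from one endpoint to the other. The backward direction does not need \Crule{}.
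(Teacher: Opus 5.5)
Your proof is correct and follows the paper's route: in the forward direction, \Crule{}'s $t_j \prec t_i$ plus transitivity gives comparability and irreflexivity rules out interposition; in the backward direction, a case analysis on comparability with the two endpoints yields the \Vrule{} disjunction. Your two side remarks are also accurate: the backward direction does not need \Crule{}, and distinctness from $t_k \notin \{t_i, t_j\}$ is what turns comparability into a strict disjunction.
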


\begin{proof}[Proof sketch]
\Vrule{}$\,\Rightarrow\,$(1)$\,\land\,$(2):
\Crule{} gives $t_j \prec t_i$; combining with \Vrule{}'s
disjunction and transitivity yields comparability.
(1)$\,\land\,$(2)$\,\Rightarrow\,$\Vrule{}:
if $t_k$ is comparable with both endpoints and does not interpose,
case analysis yields $t_k \prec t_j$ or $t_i \prec t_k$.
\end{proof}

This decomposition clarifies why partial orders suffice
for \CVrules{}.
Transactions on disjoint data items need not
be ordered.
The decomposition shows that even among writers of the
\emph{same} data item, not all pairs need be mutually comparable: endpoint
comparability only demands that each competing writer be comparable
with the two endpoints of the reads-from relationship it competes with.
In particular, \emph{non-visible writes}, writes whose versions
are never read~\cite{Thom1979,Naka2020-NWR}, need not be
ordered among themselves.

The decomposition also reflects how protocols ensure \Vrule{} in
practice.
Endpoint comparability often follows directly from the protocol's
ordering mechanism, while no-interposition depends on each protocol's
specific design choices: locks, version selection, or
certification (\cref{sec:case-studies}).

\subsection{Bounded-Width Tractability}
\label{sec:bounded-width}

As noted in \cref{rem:cv-rules-complexity}, determining serializability
is NP-complete in general.
The C/V separation pinpoints the source of this hardness:
the orderings forced by \Crule{} are uniquely determined
by the reads-from relation,
while \Vrule{} introduces \emph{disjunctive} constraints: for each
competing writer~$t_k$, either $t_k \prec t_j$ or $t_i \prec t_k$
must hold.
This disjunctive structure drives the NP-hardness.

When transaction concurrency is bounded by a fixed constant, however, the problem becomes
tractable.

\begin{definition}[Width]
\label{def:width}
The \emph{width} of a history is the maximum size of an antichain
(a set of pairwise incomparable transactions) under the partial order
forced by \Crule{}, that is, the transitive closure of the reads-from
relation $\{(t_j, t_i) \mid \exists x.\, \vf(x, t_i) = t_j\}$.
\end{definition}

\begin{theorem}[Bounded-width decidability]
\label{thm:bounded-width}
For any fixed~$k$, serializability of histories with width at
most~$k$ is decidable by a state-space search.
\end{theorem}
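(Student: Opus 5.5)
By \cref{thm:cv-rules}, serializability is equivalent to the existence of an order satisfying \Crule{} and \Vrule{}. The proof of that theorem (via Szpilrajn) shows we may restrict attention to \emph{total} orders $\prec$ on $T\cup\{t_0\}$ with $t_0$ first. For a total order, \Vrule{} means that $t_j$ is the latest writer of $x$ before $t_i$. So we decide serializability by searching for a linear extension of the \Crule{} order $P$ (the transitive closure of reads-from, with $t_0$ as minimum) that satisfies this condition at every read. Linear extensions of $P$ are built by repeatedly appending a minimal element of the remaining set. The set of already-placed transactions is always a down-set (ideal) of $P$.

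\textbf{Step 1: bounding the number of ideals.} By Dilworth's theorem, width at most $k$ lets us partition $T$ into at most $k$ chains $C_1,\dots,C_k$, and such a partition is computable in polynomial time via bipartite matching. Every ideal $I$ is determined by how many elements of each chain it contains, i.e.\ by the vector $(|I\cap C_1|,\dots,|I\cap C_k|)$. Hence there are at most $(|T|+1)^k$ ideals, polynomially many for fixed $k$.

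\textbf{Step 2: state design.} This is the main obstacle. Whether appending $t$ is legal depends on the \emph{last writer} of each item $x$ that $t$ reads, and the last writer is not a function of the ideal alone. I would enrich the state to a pair $(I,L)$, where $L$ records the last placed writer for each item that still matters. The hope is that $L$ is nearly determined by $I$, and the argument would go as follows. Suppose $t_j$ writes $x$ and is read by some unplaced reader $t_i$. By \Crule{}, $t_j\in I$. Any later-placed writer $t_k$ of $x$ must come after $t_i$, which forces the reader to be placed before $t_k$. So the only configurations that can still be completed are those in which the last writer of $x$ is the source of every pending reader of $x$. If there is a pending reader of $x$, its source is therefore determined. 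If there are no pending readers of $x$, the identity of the last writer is irrelevant to future legality: the next reader of $x$ must read from some writer placed after the present moment, or the state is dead. Consequently it suffices for the state to record, for each $x$, a single bit: whether the ``current version is still live'', plus consistency. I would therefore replace $L$ with the check below, which is local to the ideal.

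\textbf{Step 3: transitions and correctness.} Appending a minimal $t$ of $T\setminus I$ is allowed iff:
\begin{enumerate}
\item for each read $\vf(x,t)=t_j$, no writer of $x$ other than $t_j$ is placed after $t_j$; and
\item if $t$ writes $x$, no reader outside $I\cup\{t\}$ reads $x$ from a writer in $I$.
\end{enumerate}
Condition~(2) depends only on $I$ and $t$. Condition~(1) becomes a function of $I$ once (2) has been enforced at all earlier steps: by (2), whenever a writer of $x$ was appended after $t_j$, all readers of $t_j$'s version of $x$ were already placed. So condition~(1) reduces to checking condition~(2) at insertion time.

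The reachability check then runs over at most $(|T|+1)^k$ ideals, with a polynomial cost per edge. Soundness follows because every path from $\emptyset$ to $T$ spells out a total order satisfying \Crule{} and \Vrule{}. Completeness follows because every such total order passes only through ideals and satisfies (2) at each step. Together with \cref{thm:cv-rules}, this gives decidability in time $|T|^{O(k)}\cdot\mathrm{poly}(|D|)$.
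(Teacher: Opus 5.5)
Your proposal is correct and follows essentially the same route as the paper: a Dilworth chain partition, states given by chain-progress vectors (equivalently, ideals of the \Crule{}-forced order), appending only eligible minimal transactions, accepting a step only if it introduces no \Vrule{} violation, and soundness and completeness of reachability to the final state. Your Step~3 also makes explicit something the paper's sketch leaves implicit: the acceptance test depends only on the placed set and the candidate (your condition~(2), which subsumes~(1)), so the ideal alone is a sufficient search state.
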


\begin{proof}[Proof sketch]
By Dilworth's theorem~(\cite{Dilw1950};
see also \cite[Theorem~2.26]{Schr2016-book}),
a partial order of width~$k$ can be partitioned into at most
$k$~chains.
A \emph{chain-progress vector} records, for each chain,
how far the serialization has progressed along that chain.
At each state, the only candidates for placement are the
at most~$k$ next unplaced transactions, one per chain,
and a candidate is eligible only when all its predecessors
in the forced ordering (including those in other chains) have
been placed.
Placing an eligible transaction advances the corresponding chain
and is accepted only if no \Vrule{} violation is introduced.
An exhaustive search from the initial vector (all chains at
position zero) explores all reachable states;
the history is serializable if and only if the search reaches
the final vector (all transactions placed).
Soundness follows because every accepted transition preserves
\Vrule{}, and completeness because the search explores
all valid placements.
\end{proof}

The state space contains $O(n^{k})$ chain-progress vectors
for~$n$ transactions, so for fixed~$k$ the procedure runs in
polynomial time.
Each transition evaluates at most~$k$ candidates and checks
\Vrule{} in $O(n \cdot |D|)$ time per candidate, yielding
$O(n^{k+1} \cdot |D|)$ total time for fixed~$k$.
Additional order constraints such as real-time or session order
can be added to the forced relation before measuring width;
for instance, session order bounds the width by the number of sessions.
Biswas and Enea~\cite{Bisw2019} showed that serializability is
decidable in polynomial time when the number of sessions is bounded,
also with $O(n^{k})$ state space.
Their result is recovered as the special case where session order is
included in the forced relation. Using the reads-from structure alone
yields a complementary parameterization, bounded even when session
count grows, as long as reads-from chains span sessions.
The C/V separation provides a structural explanation of
why bounded concurrency, under any parameterization, limits the
disjunctive constraints.

\subsection{Completeness}
\label{sec:completeness}

The necessity direction of \cref{thm:cv-rules} (Serializable $\Rightarrow$ CV) is straightforward: the serial order serves as the transaction order, and \Crule{} and \Vrule{} hold because the version function matches serial execution.
This has an immediate consequence for protocol verification.

\begin{proposition}[Serial executability implies completeness]
\label{prop:serial-executability}
In our version-function-only model (\cref{sec:multi-version-histories}), if a protocol can execute any sequence of transactions in a given serial order, then it is complete: every serializable history can be realized by the protocol.
\end{proposition}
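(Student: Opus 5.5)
The plan is to unfold the definitions, since in the version-function-only model a history is fully determined by its transactions' operation sets together with $\vf$. Fix a serializable history $h$. By \cref{def:serializable} there is a total order $\prec_s$ on $T \cup \{t_0\}$ with $t_0$ minimal such that, for every read $r(x) \in \mathit{ops}(t_i)$, $\vf(x, t_i)$ equals the $\prec_s$-maximal writer of $x$ preceding $t_i$. We have to show that some execution of the protocol produces a history $h'$ with the same transactions, the same operation sets, and the same version function; such an $h'$ is then $h$ itself in our model.

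The construction is to feed the transactions of $T$ to the protocol one at a time in the order $\prec_s$. By hypothesis the protocol can execute this serial sequence, so it commits every transaction (no aborts are needed, and aborts are abstracted away in any case). The point to establish is what each read returns. In a serial execution each read of $x$ by $t_i$ returns the version written by the most recent previously committed writer of $x$, with the initial version attributed to $t_0$. I would make this explicit as the meaning of ``can execute in a given serial order'': the realized version function is the serial one induced by the order. A short induction along $\prec_s$ then shows that, when $t_i$ runs, the set of committed writers of $x$ is exactly $\{t_j \mid t_j \prec_s t_i,\ w(x) \in \mathit{ops}(t_j)\}$. The realized $\vf'(x, t_i)$ is therefore the $\prec_s$-maximum of that set, and by \cref{def:serializable} this equals $\vf(x, t_i)$. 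Hence $\vf' = \vf$ and $h' = h$.

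Completeness follows: every serializable history is realized. One could also phrase the argument through the necessity direction of \cref{thm:cv-rules}. The serial order satisfies \Crule{} and \Vrule{}, and a protocol executing serially reproduces exactly the reads-from relation that this order dictates. I would take the direct unfolding above as the main argument.

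The main obstacle is not mathematical but definitional. The phrase ``can execute any sequence of transactions in a given serial order'' must be pinned down so that each read returns the latest committed version in that order. Without this, a protocol could execute serially yet return stale versions, for example under a snapshot taken too early, and the version functions would not match. I would therefore state this reading as part of the hypothesis, noting that operation interleaving and timing are irrelevant in our model. With that settled, the remaining induction is routine.
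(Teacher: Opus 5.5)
Your argument is correct and essentially matches the paper's, which treats the proposition as an immediate consequence of the necessity direction of \cref{thm:cv-rules}: the serial order $\prec_s$ witnessing serializability induces exactly the history's version function $\vf$, so a protocol that executes the transactions serially in $\prec_s$ reproduces the history. The definitional point you flag, reading ``serial executability'' so that each read returns the latest preceding writer's version, is already built into the paper's informal notion of serial execution in \cref{sec:serializability}.
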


Because our model captures only version functions, not operation ordering, completeness reduces to serial executability, a condition that most protocols satisfy trivially. This does not contradict classical results showing, e.g., that 2PL cannot produce all conflict-serializable schedules~\cite{Weik2002-book}: those limitations concern operation ordering. Indeed, all five protocols we verify (\cref{sec:case-studies}) satisfy serial executability and are thus complete and equivalent at the version-function level: each can realize every serializable history.

This equivalence does not imply interchangeability in practice, because our model is intentionally primitive: it characterizes serializability but does not distinguish protocols by performance. In classical theory, a protocol's \emph{scheduling space} (or \emph{scheduling power}~\cite[\S4.1]{Weik2002-book}) is the class of serializable histories it can realize at the operation level; our model abstracts this axis away. However, performance depends more directly on \emph{commit rate}, the fraction of transactions a protocol commits (rather than aborting and retrying) under a given workload, as demonstrated by restart-cost analysis~\cite{Kung1981-OCC}, contention modeling~\cite{Agra1987}, and workload-dependent evaluations~\cite{Yuxi2014-Abyss,Tana2020-CCBench}. Predicting commit rate requires modeling event arrival and abort strategies rather than operation ordering, an extension we leave to future work.

\section{Equivalence with MVSG}
\label{sec:mvsg-equivalence}

Having established that \CVrules{} characterize serializability, we now prove that \CVrules{} are equivalent to the classical MVSG characterization. This equivalence serves two purposes. First, it confirms that \CVrules{} capture the same notion of serializability as the established theory~\cite{Bern1983}: \CVrules{} do not define a new notion of correctness, but give an equivalent characterization via a different proof methodology. Second, it allows comparing the two approaches directly, which we do in \cref{sec:why-cv-rules}.

\subsection{Equivalence Theorem}
\label{sec:equivalence-theorem}

\begin{theorem}[MVSG Equivalence]
\label{thm:cv-mvsg-equivalence}
A history $h$ satisfies \CVrules{} (i.e., there exists a transaction order $\prec$ such that $(h, \prec)$ satisfies both \Crule{} and \Vrule{}) if and only if there exists a version order $\ll$ such that $\mathit{MVSG}(h, \ll)$ is acyclic.
\end{theorem}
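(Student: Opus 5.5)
We prove both directions directly, working with the order $\prec$ and version orders, rather than chaining through \cref{thm:cv-rules} and \cref{thm:mvsg}. Chaining would also work, since both characterize serializability. A direct proof, however, shows how the MVSG edge types correspond to \Crule{} and \Vrule{}, and it avoids treating the MVSG theorem as a black box in the mechanization.

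\textbf{(CV $\Rightarrow$ acyclic MVSG).} Let $(h,\prec)$ satisfy \Crule{} and \Vrule{}. First, use Szpilrajn's extension, as in the sufficiency part of \cref{thm:cv-rules}, to obtain a total order $<$ extending $\prec$. Both rules are preserved, because \Crule{} and \Vrule{} only assert relations of the form $a\prec b$. Next, define the version order $\ll_x$ as the restriction of $<$ to the writers of $x$. This is a total order with $t_0$ first.

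We then show that every MVSG edge $a\to b$ satisfies $a<b$. Since $<$ is a strict total order, this makes the graph acyclic.
\begin{itemize}
\item Initial edges are forward because $t_0$ is the minimum.
\item Reads-from edges are forward by \Crule{}.
\item Version-order edges $t_k\to t_j$ are forward because $t_k\ll_x t_j$ means $t_k<t_j$ by definition of $\ll_x$.
\item Anti-dependency edges $t_i\to t_k$ arise when $t_j\ll_x t_k$, i.e.\ $t_j<t_k$. \Vrule{} gives $t_k<t_j$ or $t_i<t_k$, and the first alternative contradicts $t_j<t_k$. Hence $t_i<t_k$.
\end{itemize}

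\textbf{(Acyclic MVSG $\Rightarrow$ CV).} Given $\ll$ with $G=\mathit{MVSG}(h,\ll)$ acyclic, let $\prec$ be the transitive closure of the edge relation of $G$. Acyclicity makes $\prec$ irreflexive, and it is transitive by construction. The initial edges make $t_0$ the minimum, so $(h,\prec)$ is an ordered history.

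\Crule{} is immediate from the reads-from edges. For \Vrule{}, take $\vf(x,t_i)=t_j$ and a writer $t_k\notin\{t_i,t_j\}$ of $x$. Because $\ll_x$ is total on the writers of $x$ and $t_k\neq t_j$, either $t_k\ll_x t_j$ or $t_j\ll_x t_k$. In the first case the version-order edge gives $t_k\prec t_j$. In the second case the anti-dependency edge gives $t_i\prec t_k$.

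The main subtlety, and the step that needs the most care in the mechanization, is the construction of $\prec$ as a transitive closure: irreflexivity must be derived from acyclicity, i.e.\ a path $a\to^+ a$ is a cycle. In the first direction, the point to watch is that $\ll_x$ must be total on the writers of $x$ and include $t_0$. Restricting a total extension $<$ gives this, whereas restricting $\prec$ itself would not, since $\prec$ may leave non-visible writers incomparable.
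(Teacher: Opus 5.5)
Your proof is correct and follows essentially the same route as the paper. In one direction, a Szpilrajn extension restricted to each item's writers gives the version order, and every MVSG edge points forward (for anti-dependency edges, via \Vrule{} plus asymmetry); in the other, the transitive closure of the acyclic MVSG serves as $\prec$, totality of $\ll_x$ splits \Vrule{} into the version-order and anti-dependency cases, and your closing remark on why $\prec$ itself cannot be restricted matches the paper's observation about the asymmetry of the two directions.
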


Combined with \cref{thm:cv-rules}, this yields: $\text{\CVrules{}} \Leftrightarrow \text{MVSG Acyclicity} \Leftrightarrow \text{Serializable}$. All three characterizations identify the same set of histories. While \cref{thm:cv-mvsg-equivalence} could instead be obtained by composing \cref{thm:cv-rules} with the classical MVSG theorem (\cref{thm:mvsg}), we prove it directly: the construction exposes the structural duality between the two characterizations that we discuss in \cref{sec:why-cv-rules}.

\subsection{Proof Sketch}
\label{sec:equivalence-proof}

\textbf{Sufficiency} (\CVrules{} $\Rightarrow$ MVSG Acyclicity).
Given a transaction order satisfying \CVrules{}:
\begin{enumerate}
\item Extend the partial order to a total order using Szpilrajn's theorem~\cite{Szpi1930}.
\item Construct a version order $\ll$ by restricting this total order to writers of each data item.
\item Build the MVSG using this version order.
\item \textbf{Acyclicity proof}: Show that every MVSG edge respects the serial order.
\begin{itemize}
\item Initial edges: $t_0 \prec t_i$ holds since $t_0$ is the minimum element.
\item Reads-from edges: $t_j \prec t_i$ holds by \Crule{}.
\item Version-order and anti-dependency edges: By construction, the version order agrees with the serial order, so $t_k \ll_x t_j$ implies $t_k \prec t_j$ and $t_j \ll_x t_k$ implies $t_j \prec t_k$. A version-order edge $t_k \to t_j$ thus respects the serial order directly. For an anti-dependency edge $t_i \to t_k$ (which arises when $t_j \ll_x t_k$, hence $t_k \not\prec t_j$), \Vrule{} forces $t_i \prec t_k$, so this edge respects the serial order as well.
\end{itemize}
Since all edges respect the serial order (a total order), the graph is acyclic.
\end{enumerate}

\noindent\textbf{Necessity} (MVSG Acyclicity $\Rightarrow$ \CVrules{}).
Given an acyclic MVSG:
\begin{enumerate}
\item Extract a strict partial order from the acyclic graph. Since the graph is acyclic, the edge relation (and its transitive closure) forms a strict partial order on transactions.
\item \textbf{\Crule{} verification}: For each reads-from edge $t_j \to t_i$, the extracted order satisfies $t_j \prec t_i$.
\item \textbf{\Vrule{} verification}: For each read $\vf(x, t_i) = t_j$ and each other writer $t_k$, the version order places $t_k$ either before $t_j$ (yielding a version-order edge $t_k \to t_j$, so $t_k \prec t_j$) or after $t_j$ (yielding an anti-dependency edge $t_i \to t_k$, so $t_i \prec t_k$). Either case satisfies the \Vrule{} disjunction.
\end{enumerate}

The two directions are asymmetric: necessity extracts a partial order
directly from the acyclic graph, while sufficiency must first extend the
partial order to a total order. This extension is needed because MVSG
requires the version order for each data item to be total, whereas
\CVrules{} compare only the endpoints relevant to each read.

\subsection{\CVrules{} and MVSG Compared}
\label{sec:why-cv-rules}

\begin{figure}[t]
\centering
\begin{tikzpicture}[
  box/.style={rectangle, draw, rounded corners=2pt,
              align=center, font=\footnotesize, inner sep=2.5pt,
              minimum height=0.5cm, text width=2.7cm},
  hdr/.style={font=\footnotesize\sffamily\bfseries, anchor=east},
  sbox/.style={box, text width=2.15cm},
  mbox/.style={box, text width=2.43cm},
  arr/.style={->, >=stealth, thick},
]

\def\row{0.9}   

\node[hdr] at (-0.6, 0) {MVSG};
\node[hdr] at (-0.6, -\row) {\CVrules{}};

\node[sbox] (L1) at (0.8, 0)      {Derive $\ll$};
\node[mbox] (L2) at (3.66, 0)     {Build MVSG};
\node[mbox] (L3) at (6.66, 0)     {Verify acyclicity};
\node[box, fill=black!10]  (L4) at (9.8, 0)      {Derive serial order};

\node[sbox, fill=black!10] (R1) at (0.8, -\row)  {Derive $\prec$};
\node[box]  (R2) at (9.8, -\row)  {Verify \CVrules{}};

\draw[arr] (L1) -- (L2);
\draw[arr] (L2) -- (L3);
\draw[arr] (L3) -- (L4);

\draw[arr] (R1) -- (R2);

\end{tikzpicture}
\caption{Two verification paths for serializability.
  MVSG constructs a graph and checks a global property (acyclicity),
  deriving the serial order as output.
  \CVrules{} verify local per-read conditions
  against an explicitly given transaction order.
  Both characterize the same set of serializable histories.}
\label{fig:mvsg-cv-comparison}
\end{figure}

Given that \CVrules{} and MVSG characterize the same histories, what advantages does each offer? The two approaches are dual in what they take as given and what they verify (\cref{fig:mvsg-cv-comparison}). MVSG builds a graph whose edges encode local relationships (reads-from, version-order, and anti-dependency) and verifies a global property: acyclicity of the entire graph. The serial order is derived from this graph as a byproduct. \CVrules{} take the opposite path: the transaction order is given as an explicit input, fixing the global structure, and verification reduces to checking local per-read conditions (\Crule{} and \Vrule{}). The indirect ordering and indirect proofs noted in \cref{sec:background} are both consequences of this structural asymmetry.

Which approach is more natural depends on how the protocol establishes serializability. Graph-based protocols such as IC3~\cite{Zhao2016-IC3}, Janus~\cite{Mush2016-Janus}, and Oze~\cite{Nemo2025-Oze} construct dependency graphs and check acyclicity at runtime; for these, MVSG is a natural foundation. However, many protocols establish serializability by constructing an order. For some, the order is explicit in the protocol mechanism, as with lock conflicts (2PL) or timestamps (MVTO); for others, such as SSN and Aria, the order is left implicit, and we recover it from logical timestamps and batch identifiers respectively. In either case \CVrules{} offer a more direct path than MVSG: with the order in hand, the proof checks per-read conditions against it.

Two case studies illustrate this contrast. For MVTO, both the MVSG-based proof~\cite{Bern1983} and our \CVrules{} proof rely on the same timestamp properties, but differ in how they are used: the MVSG proof routes them through version-order edges and graph acyclicity, while our \CVrules{} proof verifies \Crule{} and \Vrule{} directly against the timestamp order. For SSN, the difference is more fundamental. The original proof~\cite{Wang2017-SSN} proceeds by contradiction via an MVSG-like dependency graph, without ever constructing a serialization order. Our \CVrules{} proof instead identifies a logical timestamp not explicit in the original work and directly verifies both rules (see \cref{sec:ssn}).

\section{Protocol Verification}
\label{sec:case-studies}

We apply \CVrules{} to verify abstract models of five concurrency control protocols: Two-Phase Locking (2PL)~\cite{Eswa1976-2PL}, Multi-Version Timestamp Ordering (MVTO)~\cite{Reed1978-MVTO,Bern1983}, Serial Safety Net (SSN)~\cite{Wang2017-SSN}, Aria~\cite{Lu2020b-Aria}, and SnapChain. For each protocol, we establish \emph{soundness}: every execution satisfying the protocol's constraints is serializable. Each model considers only executions where all transactions satisfy the protocol's constraints; abort handling (e.g., deadlocks in 2PL, certification failures in SSN) is beyond the scope of this paper.

Soundness proofs follow a common pattern: we build an abstract model capturing the protocol's ordering and visibility constraints, construct a transaction order from the protocol's mechanisms, and verify \Crule{} and \Vrule{}. Each protocol, detailed below, illustrates a different aspect of \CVrules{}. 2PL derives a partial order from lock conflicts; MVTO uses predetermined timestamps as a total order. SSN derives a logical timestamp from dependency bounds, a construction not explicit in the original work. Aria combines batch ordering with dependency-based reordering within each batch; our proof also reveals that Aria's unique-write constraint is unnecessary for serializability. SnapChain reverses the direction, using \CVrules{} as a design principle: it enforces \Vrule{} by construction through per-data-item view chains.

Completeness for all five protocols holds, because they are serial-executable (\cref{prop:serial-executability}). The constructions are: 2PL acquires all locks in serial order; MVTO assigns timestamps according to serial-order position; SSN assigns commit timestamps as serial-order positions; Aria places each transaction in its own batch; SnapChain builds each data item's view chain following the serial order.

\subsection{Two-Phase Locking}
\label{sec:2pl}

\subsubsection{Protocol Model}

A 2PL execution consists of lock and unlock events with an ordering relation. Each transaction acquires locks before accessing data and releases them afterward. The key constraints are:
\begin{itemize}
\item \textbf{Lock-unlock pairing}: Each lock has a matching unlock.
\item \textbf{Two-phase discipline}: For each transaction, all locks precede all unlocks.
\item \textbf{Mutual exclusion}: Conflicting locks (same data, different transactions, at least one write) must be serialized: one transaction must unlock before the other locks.
\end{itemize}
We omit read/write events and derive the version function directly from the lock conflict order, selecting the maximal prior writer for each read. This abstraction is justified by the mutual exclusion invariant: since each read/write occurs between the corresponding lock and unlock, conflicting accesses to the same data item are serialized by the lock conflict order, which in turn determines the version function.

\subsubsection{Order Construction}

Lock conflicts naturally induce a transaction order. We define $t_j \prec t_i$ if $t_j$ unlocks data item $x$ before $t_i$ locks $x$, and the locks conflict (at least one is a write lock). The \emph{serialization order} is the transitive closure of this relation.

\subsubsection{Soundness}

\begin{theorem}[2PL Soundness]
Every 2PL execution is serializable.
\end{theorem}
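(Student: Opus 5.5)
We prove the statement by invoking \cref{thm:cv-rules}: it suffices to exhibit a strict partial order on $T \cup \{t_0\}$ satisfying \Crule{} and \Vrule{}. We take the serialization order $\prec$ from the order construction above, the transitive closure of the lock-conflict relation, and add $t_0$ as the minimum element. The argument then has three steps.

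\textbf{Step 1: $\prec$ is a strict partial order.} Transitivity holds by construction, so only irreflexivity needs work, i.e., the conflict relation must be acyclic. Following the classical 2PL argument, we map each conflict edge $t_j \to t_i$ to an event-order fact: some unlock of $t_j$ precedes some lock of $t_i$. By two-phase discipline, the lock point of $t_j$ (its last lock) precedes every unlock of $t_j$, and every lock of $t_i$ precedes or equals the lock point of $t_i$. Hence the lock point of $t_j$ strictly precedes the lock point of $t_i$. A cycle would therefore give a lock point strictly preceding itself, which contradicts irreflexivity of the event ordering. Two technical points need care. First, a transaction has at least one lock whenever it has a conflict edge. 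Second, if the event ordering is only a partial order in the model, we must use transitivity of the event order instead of comparing lock points numerically.

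\textbf{Step 2: \Crule{}.} The version function is defined as the maximal prior writer under the lock conflict order. So $\vf(x,t_i)=t_j$ with $t_j\neq t_0$ means that the write lock of $t_j$ on $x$ conflicts with the read lock of $t_i$ on $x$, and $t_j$ unlocks $x$ before $t_i$ locks $x$. That is precisely a generating edge, so $t_j \prec t_i$. When $t_j = t_0$, the rule holds because $t_0$ is the minimum.

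\textbf{Step 3: \Vrule{}.} This is the main obstacle. Let $t_k$ write $x$. Its write lock on $x$ conflicts both with the read lock of $t_i$ and with the write lock of $t_j$ on $x$. By mutual exclusion, $t_k$ is therefore ordered by an unlock-before-lock edge relative to each of them; this is exactly the endpoint-comparability condition of \cref{thm:v-rule-decomposition}. We then argue no interposition. Suppose $t_j \prec t_k \prec t_i$ on the direct edges for $x$. Then $t_k$ is a writer of $x$ whose unlock precedes the lock of $t_i$, and it is later than $t_j$ in the conflict order. This contradicts the maximality in the definition of $\vf$. Some care is needed because the maximality is taken over the per-item conflict order on $x$, whereas $\prec$ is a transitive closure across items. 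We resolve this by case-splitting on the direct $x$-edges between $t_k$ and $t_i$ and between $t_k$ and $t_j$. The combination in which $t_i$ precedes $t_k$ but $t_k$ precedes $t_j$ would give $t_i \prec t_j \prec t_i$, contradicting Step 1. The combination in which $t_k$ follows $t_j$ and precedes $t_i$ contradicts maximality. The remaining combinations give $t_k \prec t_j$ or $t_i \prec t_k$.

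The case $t_j = t_0$ is handled in the same way: $t_0$ is chosen only when no writer of $x$ precedes $t_i$ on $x$, which forces $t_i \prec t_k$. Finally, \cref{thm:cv-rules} yields serializability.
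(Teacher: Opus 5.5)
Your proposal is correct and follows the paper's proof. In both, the order is the transitive closure of lock conflicts, \Crule{} comes from the source's write-unlock preceding the reader's lock, and \Vrule{} splits into endpoint comparability (from mutual exclusion) and no interposition (from maximality of the version source). You also make explicit two points the paper's sketch leaves implicit, and both are handled correctly: acyclicity of the conflict relation via the two-phase discipline, which is needed for $\prec$ to be a strict partial order, and the reconciliation of per-item maximality with the cross-item transitive closure via direct $x$-edges.
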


The proof constructs an ordered history from the serialization order and verifies \CVrules{}:
\begin{itemize}
\item \textbf{\Crule{}}: If $t_i$ reads $x$ from $t_j$, then $t_j$ held a write lock on $x$ and released it before $t_i$ acquired its lock. Thus $t_j \prec t_i$.
\item \textbf{\Vrule{}} (via \cref{thm:v-rule-decomposition}): \emph{Endpoint comparability}: any other writer $t_k$ is ordered with both $t_j$ and $t_i$ by lock mutual exclusion. \emph{No interposition}: $t_j \prec t_k \prec t_i$ would contradict the maximality of $t_j$ as version source.
\end{itemize}

\subsection{Multi-Version Timestamp Ordering}
\label{sec:mvto}

\subsubsection{Protocol Model}

An MVTO execution maintains multiple versions of each data item and uses timestamps to determine serialization order~\cite{Reed1978-MVTO,Bern1983}. Each transaction $t$ receives a unique timestamp $\mathit{ts}(t)$ at the start of execution. The key components are:
\begin{itemize}
\item \textbf{Version store}: Each data item has multiple versions, each tagged with its writer $t$'s timestamp $\mathit{ts}(t)$ as \textsf{writeTs}.
\item \textbf{Version selection}: A read by transaction $t_i$ selects the version with the largest \textsf{writeTs} less than $\mathit{ts}(t_i)$.
\item \textbf{Timestamp uniqueness}: Different transactions have different timestamps; $t_0$ has the minimum timestamp $\mathit{ts}(t_0) = 0$.
\end{itemize}
Unlike 2PL, MVTO requires no locks; concurrency is achieved through multi-versioning.

\subsubsection{Order Construction}

The serialization order is simply the timestamp order: $t_j \prec t_i$ iff $\mathit{ts}(t_j) < \mathit{ts}(t_i)$. This is a total order determined at transaction start, before any data access occurs.

\subsubsection{Soundness}

\begin{theorem}[MVTO Soundness]
Every MVTO execution is serializable.
\end{theorem}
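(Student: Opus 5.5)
The plan is to apply \cref{thm:cv-rules} to the ordered history whose order is the timestamp order, defined by $t_j \prec t_i$ iff $\mathit{ts}(t_j) < \mathit{ts}(t_i)$. The first step is to check that this is a valid ordered history in the sense of \cref{def:ordered-history}. Irreflexivity and transitivity are inherited from $<$ on timestamps. The requirement that $t_0$ be the minimum follows from $\mathit{ts}(t_0)=0$ together with timestamp uniqueness: every $t \in T$ has $\mathit{ts}(t) \neq 0$, so $\mathit{ts}(t) > 0$, assuming timestamps are natural numbers as in the model. Uniqueness also makes $\prec$ a strict total order on $T \cup \{t_0\}$, and this totality will be used for \Vrule{}.

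Next, I would connect the version function to the version store. For a read $\vf(x, t_i) = t_j$, the model's version selection says that $t_j$'s version has \textsf{writeTs} $= \mathit{ts}(t_j)$. It also says that this version is the one with the largest \textsf{writeTs} strictly less than $\mathit{ts}(t_i)$ among the versions of $x$. I would first make explicit, as a small lemma of the model, that the versions of $x$ in the store are exactly those of the transactions $t_k$ with $w(x) \in \mathit{ops}(t_k)$, including $t_0$. With that in hand, the maximality statement becomes: for every writer $t_k$ of $x$ with $\mathit{ts}(t_k) < \mathit{ts}(t_i)$, we have $\mathit{ts}(t_k) \le \mathit{ts}(t_j)$.

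\Crule{} then follows immediately, since the selected version satisfies $\mathit{ts}(t_j) < \mathit{ts}(t_i)$, that is, $t_j \prec t_i$. For \Vrule{} I would argue via \cref{thm:v-rule-decomposition}. Endpoint comparability is free because $\prec$ is total. For no interposition, suppose a writer $t_k \notin \{t_i, t_j\}$ of $x$ satisfied $\mathit{ts}(t_j) < \mathit{ts}(t_k) < \mathit{ts}(t_i)$. Then $t_k$'s version would have a larger \textsf{writeTs} than $t_j$'s while still being below $\mathit{ts}(t_i)$, contradicting maximality. Alternatively, I could argue directly by trichotomy: since $t_k \neq t_j$ and timestamps are unique, either $\mathit{ts}(t_k) < \mathit{ts}(t_j)$, or $\mathit{ts}(t_k) > \mathit{ts}(t_j)$, in which case maximality forces $\mathit{ts}(t_k) \ge \mathit{ts}(t_i)$, and uniqueness with $t_k \ne t_i$ gives $t_i \prec t_k$.

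I expect the main obstacle to be the bookkeeping step that ties the abstract version function $\vf$ to the version store. Specifically, it must be established that every writer of $x$ in the history contributes a version tagged with its own timestamp, and that the selected version's writer is exactly $\vf(x,t_i)$. Once this correspondence is fixed in the model, the order-theoretic part is a short case analysis. \cref{thm:cv-rules} then yields serializability.
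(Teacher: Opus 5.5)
Your proposal is correct and matches the paper's proof. Both take the timestamp order as $\prec$, derive \Crule{} from the selected version satisfying $\textsf{writeTs} < \mathit{ts}(t_i)$, and obtain \Vrule{} via the decomposition, with endpoint comparability from totality and no-interposition from maximality of version selection; your extra checks (that $t_0$ is minimal, and that the store's versions of $x$ correspond exactly to the writers of $x$) are details the paper leaves implicit in its model.
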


The proof constructs an ordered history from the timestamp order and verifies \CVrules{}:
\begin{itemize}
\item \textbf{\Crule{}}: If $t_i$ reads $x$ from $t_j$, then $t_i$ selected a version with $\textsf{writeTs} = \mathit{ts}(t_j)$. Version selection requires $\textsf{writeTs} < \mathit{ts}(t_i)$, so $\mathit{ts}(t_j) < \mathit{ts}(t_i)$, establishing $t_j \prec t_i$.
\item \textbf{\Vrule{}} (via \cref{thm:v-rule-decomposition}): \emph{Endpoint comparability} is immediate since timestamps form a total order. \emph{No interposition}: if $\mathit{ts}(t_j) < \mathit{ts}(t_k) < \mathit{ts}(t_i)$, then $t_k$'s version would have been selected instead of $t_j$'s, a contradiction.
\end{itemize}

\subsection{Serial Safety Net}
\label{sec:ssn}

\subsubsection{Protocol Model}

SSN~\cite{Wang2017-SSN} is a \emph{verification} protocol: rather than scheduling operations, it checks whether a given execution is serializable. Each transaction receives a unique physical commit timestamp $\mathit{ct}(t)$. Since our model considers only committed transactions, $\mathit{ct}$ is simply a given injective assignment from transactions to timestamps. The protocol requires that each transaction reads only from already-committed transactions, i.e., $\mathit{ct}(t_j) < \mathit{ct}(t_i)$ whenever $t_i$ reads from $t_j$. SSN verifies serializability by analyzing dependencies.

An \emph{anti-dependency} from $t_i$ to $t_j$ exists when $t_i$ reads a data item $x$ and $t_j$ later writes a newer version of $x$ in commit-timestamp order (i.e., $t_j$ overwrites the version $t_i$ read). For each transaction $t$, we define two sets that capture the ordering constraints SSN enforces:
\begin{itemize}
\item $T^{\mathit{forward}}_t$ (must precede $t$):
  (a)~transactions that $t$ reads from;
  (b)~transactions whose version of some item is overwritten by~$t$; and
  (c)~transactions from which there is a direct (single-step) anti-dependency to~$t$ and that committed before~$t$.
\item $T^{\mathit{backward}}_t$ (must follow $t$): transactions reachable from $t$ via anti-dependencies, restricted to those that committed before $t$.
\end{itemize}
From these sets, SSN's certification values are defined as:
\begin{itemize}
\item $\eta(t) = \max\{\mathit{ct}(t') \mid t' \in T^{\mathit{forward}}_t\}$ (\emph{forward constraint}).
\item $\pi(t) = \min(\{\mathit{ct}(t') \mid t' \in T^{\mathit{backward}}_t\} \cup \{\mathit{ct}(t)\})$ (\emph{backward constraint}).
\end{itemize}
An execution is \emph{SSN-certified} if $\eta(t) < \pi(t)$ for all transactions $t$. Intuitively, $\eta(t)$ is the latest commit timestamp that must precede $t$ and $\pi(t)$ the earliest that must follow it; certification $\eta(t) < \pi(t)$ means these two windows are disjoint, so $t$ can be consistently placed between its predecessors and successors.

\subsubsection{Order Construction}

The original SSN paper proves serializability indirectly, by showing that the certification condition prevents cycles in a serialization graph, but does not explicitly construct a serialization order. We show that SSN implicitly defines a \emph{logical timestamp} $\rho(t)$ that determines the serialization order.

We construct the order from $\pi(t)$ values. However, multiple transactions may share the same $\pi$ value, so $\pi$ alone does not suffice. The key observation for tie-breaking is: when $\pi(t) = \pi(t')$ and $\mathit{ct}(t) < \mathit{ct}(t')$, an anti-dependency $t' \to t$ connects them, requiring $t'$ (which committed later but read a version that $t$ overwrote) to \emph{precede} $t$ in the serialization order. That is, among transactions with the same $\pi$, the physical commit order must be \emph{reversed}.

This motivates the definition $\rho(t) = (\pi(t), -\mathit{ct}(t))$ with lexicographic order (\cref{fig:ssn-rho}), and the serialization order is $t \prec t'$ iff $\rho(t) < \rho(t')$. We call $\rho$ a ``logical timestamp'' because it differs from the physical commit order. The SSN-certified condition $\eta(t) < \pi(t)$ ensures that transactions preceding $t$ (which contribute to $\eta(t)$) have $\rho$ values less than $\rho(t)$, since $\rho(t)_1 = \pi(t) > \eta(t)$. A topological sort of the MVSG would also yield some linear extension of the dependency graph, but without explicit correspondence to SSN's runtime variables. Instead, $\rho$ yields the order directly from the protocol's certification values.

\begin{figure}[t]
\centering
\begin{tikzpicture}[
  dot/.style={circle, fill, inner sep=1.5pt},
  lbl/.style={font=\footnotesize},
  arr/.style={->, semithick},
  anti/.style={->, semithick, densely dashed},
  num/.style={circle, draw, inner sep=0.5pt, minimum size=11pt,
              font=\scriptsize, fill=white},
]

\draw[arr] (0, 0) -- (8.2, 0) node[right, lbl] {$\pi$};
\draw[arr] (0, 0) -- (0, 3.9) node[above, lbl] {$\mathit{ct}$};

\draw[gray!40, dashed] (2, 0) -- (2, 3.5);
\draw[gray!40, dashed] (5.5, 0) -- (5.5, 3.5);
\node[lbl, below=2pt] at (2, 0) {$\pi_1$};
\node[lbl, below=2pt] at (5.5, 0) {$\pi_2$};

\node[dot] (u) at (2, 1.5) {};
\node[lbl, left=3pt] at (u) {$u$};
\node[num] at (2.32, 1.82) {1};

\node[dot] (tp) at (5.5, 3.0) {};
\node[lbl, right=3pt] at (tp) {$t'$};
\node[num] at (5.18, 3.26) {2};

\node[dot] (tt) at (5.5, 1.1) {};
\node[lbl, right=3pt] at (tt) {$t$};
\node[num] at (5.18, 0.84) {3};

\draw[anti] (tp) to[bend left=40]
  node[right, align=left, font=\footnotesize, xshift=1pt] {backward\\anti-dep.} (tt);

\draw[arr] (1.2, -0.95) -- (6.3, -0.95)
  node[midway, below, lbl] {serialization order $\prec$};

\end{tikzpicture}
\caption{Constructing the serialization order from
  $\rho(t) = (\pi(t), -\mathit{ct}(t))$.
  Transactions are ordered primarily by $\pi$ (the backward constraint);
  the numbered circles give the resulting order $\prec$.
  Within one $\pi$ value (here $t, t'$ at $\pi_2$),
  the tie-break $-\mathit{ct}$ places the larger commit timestamp first.
  This realizes a \emph{backward} anti-dependency $t' \to t$:
  $t'$ read a version that $t$ overwrote, yet $\mathit{ct}(t) < \mathit{ct}(t')$,
  so $t \in T^{\mathit{backward}}_{t'}$.
  The reader~$t'$ must precede the overwriter~$t$,
  so the transaction that committed later is serialized earlier.}
\label{fig:ssn-rho}
\end{figure}

\subsubsection{Soundness}

\begin{theorem}[SSN Soundness]
Every SSN-certified execution is serializable.
\end{theorem}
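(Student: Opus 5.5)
The plan is to take the logical timestamp $\rho(t) = (\pi(t), -\mathit{ct}(t))$ with lexicographic order as the transaction order $\prec$, check that $(h,\prec)$ satisfies \Crule{} and \Vrule{}, and conclude by \cref{thm:cv-rules}. Since $\mathit{ct}$ is injective, $\rho$ is injective, so $\prec$ is a strict total order. For $t_0$ being the minimum, I take $\mathit{ct}(t_0)=0$ below all other commit timestamps, so $\pi(t_0)=0$. Every $t\in T$ has $\pi(t) > \eta(t) \ge 0$ by certification, so $t_0$ comes first. Because $\prec$ is total, endpoint comparability in \cref{thm:v-rule-decomposition} is free. Still, I plan to verify the \Vrule{} disjunction directly through three ordering lemmas.

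\textbf{Lemma A (forward members precede).} If $t' \in T^{\mathit{forward}}_t$, then $\rho(t') < \rho(t)$. By definition of $\pi$ we have $\pi(t') \le \mathit{ct}(t')$, by definition of $\eta$ we have $\mathit{ct}(t') \le \eta(t)$, and by certification $\eta(t) < \pi(t)$. So $\pi(t') < \pi(t)$, and the first component already decides the comparison.

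\textbf{Lemma B (backward anti-dependencies).} Suppose there is a direct anti-dependency $t_i \to t_k$ with $\mathit{ct}(t_k) < \mathit{ct}(t_i)$. Then $\rho(t_i) < \rho(t_k)$. First, $t_k \in T^{\mathit{backward}}_{t_i}$, so $\pi(t_i) \le \mathit{ct}(t_k)$. Next, every $u \in T^{\mathit{backward}}_{t_k}$ is reachable from $t_k$, hence from $t_i$, and satisfies $\mathit{ct}(u) < \mathit{ct}(t_k) < \mathit{ct}(t_i)$, so $u \in T^{\mathit{backward}}_{t_i}$. Both terms in the minimum defining $\pi(t_k)$ are therefore $\ge \pi(t_i)$, giving $\pi(t_i) \le \pi(t_k)$. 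If the inequality is strict we are done. On a tie, the second component decides: $-\mathit{ct}(t_i) < -\mathit{ct}(t_k)$.

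\textbf{Lemma C (forward anti-dependencies).} If the anti-dependency $t_i \to t_k$ has $\mathit{ct}(t_i) < \mathit{ct}(t_k)$, then $t_i \in T^{\mathit{forward}}_{t_k}$ by clause (c), and Lemma A gives $\rho(t_i) < \rho(t_k)$.

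\Crule{} then follows immediately: if $\vf(x,t_i)=t_j$, then $t_j \in T^{\mathit{forward}}_{t_i}$ by clause (a), so Lemma A applies. For \Vrule{}, fix $\vf(x,t_i)=t_j$ and another writer $t_k$ of $x$, and split on the commit-timestamp version order. If $\mathit{ct}(t_k) < \mathit{ct}(t_j)$, then $t_j$ overwrites (possibly indirectly) $t_k$'s version. Following the chain of consecutive writers of $x$ by $\mathit{ct}$ and applying Lemma A via clause (b) at each step, transitivity gives $t_k \prec t_j$. If $\mathit{ct}(t_j) < \mathit{ct}(t_k)$, there is an anti-dependency $t_i \to t_k$, and Lemma B or Lemma C (depending on whether $t_k$ committed before or after $t_i$) gives $t_i \prec t_k$.

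I expect the main obstacle to be Lemma B, the monotonicity $\pi(t_i) \le \pi(t_k)$. It relies on the backward set being defined by reachability (not single steps) and on correctly propagating the "committed before" restriction. A secondary point is the precise reading of "overwritten by $t$": if it means only the immediate predecessor version, the chaining argument above is needed; if it covers all earlier versions, Lemma A applies directly.
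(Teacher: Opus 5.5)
Your proposal is correct and follows the paper's proof. It uses the same order $\rho(t)=(\pi(t),-\mathit{ct}(t))$, the same chain $\pi(t')\le\mathit{ct}(t')\le\eta(t)<\pi(t)$ for forward-set members, and the same split on $\mathit{ct}(t_i)$ versus $\mathit{ct}(t_k)$, including backward-set monotonicity $\pi(t_i)\le\pi(t_k)$ with the $-\mathit{ct}$ tie-break. The differences are only in packaging: you check the \Vrule{} disjunction directly rather than refuting $t_j \prec t_k \prec t_i$ via \cref{thm:v-rule-decomposition}, and you also verify that $t_0$ is minimal and cover the narrow reading of clause (b) by chaining.
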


The proof verifies \CVrules{} using the logical timestamp order:
\begin{itemize}
\item \textbf{\Crule{}}: If $t_i$ reads from $t_j$, then $t_j \in T^{\mathit{forward}}_{t_i}$, so $\mathit{ct}(t_j) \le \eta(t_i)$. Chaining inequalities: $\rho(t_j)_1 = \pi(t_j) \le \mathit{ct}(t_j) \le \eta(t_i) < \pi(t_i) = \rho(t_i)_1$, hence $\rho(t_j) < \rho(t_i)$.
\item \textbf{\Vrule{}} (via \cref{thm:v-rule-decomposition}): \emph{Endpoint comparability} holds because the logical timestamp order is total. \emph{No interposition}: suppose for contradiction that $t_j \prec t_k \prec t_i$, i.e., $\rho(t_j) < \rho(t_k) < \rho(t_i)$. First, $\mathit{ct}(t_j) < \mathit{ct}(t_k)$ must hold; otherwise $\rho(t_k) < \rho(t_j)$ by the same reasoning as \Crule{} (since $t_k \in T^{\mathit{forward}}_{t_j}$), contradicting $\rho(t_j) < \rho(t_k)$. Then $t_k$ writes a newer version of~$x$ than~$t_j$, creating an anti-dependency from $t_i$ to $t_k$. Two cases arise (by $\mathit{ct}$ uniqueness):
  \begin{enumerate}
  \item $\mathit{ct}(t_i) < \mathit{ct}(t_k)$: then $t_i \in T^{\mathit{forward}}_{t_k}$, so $\pi(t_i) \le \mathit{ct}(t_i) \le \eta(t_k)$ and hence $\rho(t_i) < \rho(t_k)$, a contradiction.
  \item $\mathit{ct}(t_k) < \mathit{ct}(t_i)$: then $t_k \in T^{\mathit{backward}}_{t_i}$. By transitivity of anti-dependencies, $T^{\mathit{backward}}_{t_k} \cup \{t_k\} \subseteq T^{\mathit{backward}}_{t_i}$, so $\pi(t_i) \le \pi(t_k)$; combined with $\mathit{ct}(t_k) < \mathit{ct}(t_i)$, this gives $\rho(t_i) < \rho(t_k)$, a contradiction.
  \end{enumerate}
\end{itemize}

\subsection{Aria}
\label{sec:aria}

\subsubsection{Protocol Model}

Aria~\cite{Lu2020b-Aria} is a deterministic concurrency control protocol based on batch execution. Transactions are organized into an ordered sequence of batches $B_0, B_1, \ldots$. Within each batch, transactions receive unique identifiers called \emph{Aria-tids}, denoted $\mathit{atid}(t)$, and their serialization order is determined by dependency analysis.

For transaction $t_i$ in batch $B_n$, we define the \emph{predecessor set} $T^{\mathit{prev}}_i = \{t_j \in B_n \mid \mathit{atid}(t_j) < \mathit{atid}(t_i)\}$. The key constraints are:
\begin{itemize}
\item \textbf{Batch snapshot}: Each batch reads from a snapshot determined by preceding batches. Batch $B_0$ reads from $t_0$; for $n > 0$, batch $B_n$ reads from the state produced by all committed writes in batches $B_0, \ldots, B_{n-1}$, selecting, for each data item, the $\prec$-maximum writer among the preceding batches under the order constructed in \cref{sec:aria-order}. (This involves only the restriction of~$\prec$ to $B_0, \ldots, B_{n-1}$, a total order that does not depend on $B_n$'s own intra-batch order.)
\item \textbf{Dependency exclusivity}: Within each batch, each transaction has at most one type of dependency with its predecessors. A transaction $t_i$ has a \emph{WAR dependency} (write-after-read) if $t_i$ writes a data item that some $t_j \in T^{\mathit{prev}}_i$ reads, and a \emph{RAW dependency} (read-after-write) if some $t_j \in T^{\mathit{prev}}_i$ writes a data item that $t_i$ reads. Aria requires that no transaction has both types simultaneously.
\end{itemize}
The batch snapshot property ensures all transactions in a batch see a consistent state, and dependency exclusivity enables deterministic reordering within each batch. The original Aria protocol also requires unique writes per data item (no WAW dependency), but this constraint is unnecessary for serializability: reads are fully determined by the batch snapshot, and the intra-batch serialization order is fully determined by dependency exclusivity, regardless of write overlaps. Our proof never uses the unique-write assumption. In Aria's original setting, the no-WAW constraint avoids parallel installations to the same record in its runtime-level concrete model, a concern outside our version-function-only model.

\subsubsection{Order Construction}
\label{sec:aria-order}

The serialization order combines batch ordering with dependency-based reordering within each batch:
\begin{itemize}
\item $t_0 \prec t$ for all $t \neq t_0$.
\item \textbf{Inter-batch}: If $t_i \in B_m$ and $t_k \in B_n$ with $m < n$, then $t_i \prec t_k$.
\item \textbf{Intra-batch}: Within each batch, the order depends on dependency type:
  \begin{itemize}
  \item If $t$ has a RAW dependency, then $t \prec t'$ for all $t' \in T^{\mathit{prev}}_t$ (reorder $t$ before its predecessors).
  \item If $t$ has no RAW dependency, then $t' \prec t$ for all $t' \in T^{\mathit{prev}}_t$ (keep predecessors before $t$).
  \end{itemize}
\end{itemize}
Dependency exclusivity ensures the intra-batch order construction is well-defined. For any pair $(t_i, t_j)$ in the same batch with $\mathit{atid}(t_i) < \mathit{atid}(t_j)$, the ordering depends solely on whether $t_j$ has a RAW dependency: if so, $t_j \prec t_i$; otherwise, $t_i \prec t_j$. In particular, if $t_i$ reads and $t_k$ writes the same data item within a batch, then $t_i \prec t_k$ regardless of Aria-tid order.

\subsubsection{Soundness}

\begin{theorem}[Aria Soundness]
Every Aria execution is serializable.
\end{theorem}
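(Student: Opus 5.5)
The plan is to show that the order $\prec$ of \cref{sec:aria-order} is a strict total order on $T \cup \{t_0\}$, and then verify \Crule{} and \Vrule{} directly. \cref{thm:cv-rules} then gives serializability. Totality also makes endpoint comparability in \cref{thm:v-rule-decomposition} automatic, but I would simply check the \Vrule{} disjunction itself. Unique writes are never used.

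\textbf{Step 1: $\prec$ is a strict total order.} Between batches and with respect to $t_0$ this is immediate. Within a batch $B_n$, I would give an explicit ranking that reproduces the pairwise rule. Take two transactions with $\mathit{atid}(a) < \mathit{atid}(b)$. The rule says $b \prec a$ if $b$ has a RAW dependency, and $a \prec b$ otherwise. This yields three facts:
\begin{itemize}
\item Two RAW transactions are ordered by \emph{decreasing} Aria-tid.
\item Two non-RAW transactions are ordered by \emph{increasing} Aria-tid.
\item A RAW transaction $r$ always precedes a non-RAW transaction $m$. If $\mathit{atid}(m) < \mathit{atid}(r)$, then $r \prec m$ because $r$ has RAW. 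If $\mathit{atid}(r) < \mathit{atid}(m)$, then $r \prec m$ because $m$ lacks RAW.
\end{itemize}
Hence $\prec$ on $B_n$ coincides with the lexicographic order on the key $(0, -\mathit{atid}(t))$ for RAW transactions and $(1, \mathit{atid}(t))$ otherwise. This key is injective because Aria-tids are unique. The intra-batch order is therefore a strict total order, with irreflexivity and transitivity inherited from the key. Combining lexicographically with the batch index gives a global strict total order with $t_0$ minimal.

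This step also makes the batch snapshot definition non-circular. The snapshot for $B_n$ uses only $\prec$ restricted to $B_0, \ldots, B_{n-1}$, so the version function can be defined by induction on $n$. I expect getting this well-founded simultaneous definition of order and version function right (especially in Lean) to be the main technical obstacle. Dependency exclusivity is what guarantees that the intra-batch order is well-defined.

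\textbf{Step 2: \Crule{}.} Suppose $t_i \in B_n$ reads $x$ from $t_j = \vf(x, t_i)$. By the batch snapshot property, $t_j$ is either $t_0$ or a transaction in some $B_m$ with $m < n$. In both cases $t_j \prec t_i$, either because $t_0$ is minimal or by the inter-batch rule.

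\textbf{Step 3: \Vrule{}.} Let $t_k \notin \{t_i, t_j\}$ write $x$, with $t_i \in B_n$ and $t_k \in B_m$. I would split on $m$.
\begin{itemize}
\item \textbf{$m < n$.} Then $t_k$ is a candidate writer in the snapshot of $B_n$. Since $t_j$ is the $\prec$-maximum such writer (or $t_0$ if there are none, which cannot happen here because $t_k$ exists), and $\prec$ is total with $t_k \ne t_j$, we get $t_k \prec t_j$.
\item \textbf{$m > n$.} Then $t_i \prec t_k$ by the inter-batch rule.
\item \textbf{$m = n$.} Here $t_i$ reads $x$ and $t_k$ writes $x$ in the same batch, and I claim $t_i \prec t_k$. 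If $\mathit{atid}(t_i) < \mathit{atid}(t_k)$, then $t_k$ has a WAR dependency. By dependency exclusivity it has no RAW dependency, so $t_i \prec t_k$. If $\mathit{atid}(t_k) < \mathit{atid}(t_i)$, then $t_i$ has a RAW dependency (its predecessor $t_k$ writes the item $t_i$ reads), so $t_i$ is reordered before $t_k$.
\end{itemize}
Each case yields $t_k \prec t_j$ or $t_i \prec t_k$, so \Vrule{} holds, and \cref{thm:cv-rules} concludes the proof.
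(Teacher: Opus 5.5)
Your proposal is correct and takes essentially the same route as the paper. \Crule{} comes from the batch snapshot plus inter-batch ordering. \Vrule{} is checked directly, without the decomposition, by splitting on whether the competing writer lies in an earlier, the same, or a later batch, with dependency exclusivity used exactly in the same-batch case.

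Your Step~1 key, $(0,-\mathit{atid})$ for RAW transactions and $(1,\mathit{atid})$ otherwise, just makes explicit the totality and transitivity of $\prec$ that the paper takes for granted. The obstacle you anticipate there does not actually arise. RAW depends only on $\mathit{ops}$, batch membership and Aria-tids, so $\prec$ is fixed independently of the version function and needs no simultaneous inductive definition. Likewise, Step~1 itself never uses dependency exclusivity.
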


The proof verifies \CVrules{} using this serialization order:
\begin{itemize}
\item \textbf{\Crule{}}: If $t_i$ in batch $B_n$ reads $x$ from $t_j$, then $t_j$ is the writer selected by the batch snapshot of $B_n$ at $x$. If $t_j = t_0$, then $t_0 \prec t_i$ by the base rule. Otherwise, $t_j$ belongs to some earlier batch $B_m$ with $m < n$, so $t_j \prec t_i$ by inter-batch ordering.
\item \textbf{\Vrule{}}: Consider $t_i$ in batch $B_n$ reading $x$ from $t_j$, and another writer $t_k$ of $x$. If $t_k$ is in a batch before $B_n$, then $t_k \prec t_j$ because the snapshot selects the maximum writer and $t_k \ne t_j$. Otherwise $t_k$ is in $B_n$ or later, giving $t_i \prec t_k$: by intra-batch ordering (dependency exclusivity ensures readers precede writers of the same item) if $t_k \in B_n$, or by inter-batch ordering if $t_k$ is in a later batch.
\end{itemize}

\subsection{SnapChain}
\label{sec:snapchain}

\subsubsection{Protocol Model}

The preceding four protocols were designed independently and verified against \CVrules{} after the fact. SnapChain demonstrates the reverse direction: using \CVrules{} as a starting point for design. SnapChain uses per-data-item \emph{view chains} (ordered slots of writers and readers) to enforce \Vrule{} by construction.

For each data item $x$, a \emph{view chain} is a list of \emph{slots} $[(W_0, R_0, M_0), (W_1, R_1, M_1), \ldots]$.
Each slot corresponds to a version of $x$: the writers produce the version, the readers consume it, and a read-modify-write (rmw) transaction both consumes the current version and produces the next.
\begin{itemize}
\item $W_s$ (\emph{writers}): a non-empty set of transactions that write $x$ at this position.
\item $R_s$ (\emph{readers}): transactions that read $x$ from the writer of this slot without writing $x$.
\item $M_s$ (\emph{rmw}): a set of at most one transaction that reads from this slot's writer and writes the next version of $x$.
\end{itemize}
The initial slot has $W_0 = \{t_0\}$. If a slot has readers or an rmw ($R_s \cup M_s \neq \emptyset$), then $|W_s| = 1$: readers see a unique version. If $M_s = \{t\}$, then slot $s+1$ exists with $W_{s+1} = \{t\}$.

Each transaction $t$ has start and end events with a partial order: $\mathit{end}(t_0) < \mathit{start}(t)$ for all $t \neq t_0$, and $\mathit{start}(t) < \mathit{end}(t)$.
A transaction can only observe effects of transactions that ended before it started. This determines its snapshot.
Each transaction $t \neq t_0$ that accesses $x$ is placed in the view chain based on its unique \emph{visible slot}: the latest slot containing a member $t'$ that accesses $x$ and satisfies $\mathit{end}(t') < \mathit{start}(t)$.
Read-only transactions join $R_s$, blind writers join $W_{s+1}$, and rmw transactions join both $M_s$ and $W_{s+1}$.
The version function is derived from the chain: if $t_i \in R_s \cup M_s$, it reads from the unique writer in $W_s$.

In practice, determining the visible slot requires observing which transactions have completed before a new transaction starts. Our model takes this information as given.

\subsubsection{Order Construction}

The serialization order is built from per-data-item constraints. For each data item $x$, the \emph{per-data order} is generated by:
\begin{itemize}
\item \textbf{Intra-slot}: $W_s \prec R_s \prec M_s$.
\item \textbf{Cross-slot}: When $M_s = \emptyset$, $W_s \cup R_s \prec W_{s+1}$. When $M_s = \{t\}$, $t$ bridges slot $s$ and $s+1$ as it appears in both $M_s$ and $W_{s+1}$.
\end{itemize}
The \emph{serialization order} is the transitive closure of the union of all per-data orders. This requires that per-data orders from different data items are mutually consistent; when they are not, some transactions must be aborted to restore consistency. The model assumes that committed transactions generate a consistent global order; abort strategies are outside the scope.

\subsubsection{Soundness}

\begin{theorem}[SnapChain Soundness]
Every SnapChain execution is serializable.
\end{theorem}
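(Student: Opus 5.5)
The plan follows the pattern of the other case studies. We take the serialization order $\prec$ to be the transitive closure of the union of the per-data orders; by the model's consistency assumption this is a strict partial order, and $t_0$ is its minimum, since $t_0 \in W_0$ precedes everything in each chain. By \cref{thm:cv-rules}, it then suffices to verify \Crule{} and \Vrule{} for $(h,\prec)$. The one structural fact we need first is \emph{chain monotonicity}: for a fixed data item $x$, the per-data order for $x$ totally orders slots. More precisely, every member of slot $s$ (in $W_s$, $R_s$ or $M_s$) precedes every writer of any slot $s' > s$. This follows by induction on $s'-s$, using the cross-slot rule when $M_s=\emptyset$, and the chain $W_s \prec M_s = W_{s+1}$ together with $R_s \prec M_s$ otherwise.

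\textbf{\Crule{}.} If $\vf(x,t_i)=t_j$, then $t_i \in R_s \cup M_s$ and $W_s=\{t_j\}$ for some slot $s$ of the chain for $x$. The intra-slot rule gives $t_j \prec t_i$ directly.

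\textbf{\Vrule{}.} Let $t_k \notin\{t_i,t_j\}$ be another writer of $x$. Since every writer of $x$ is placed in the view chain, $t_k \in W_{s'}$ for some $s'$. We split into three cases.
\begin{enumerate}
\item $s'=s$. This is impossible, because $|W_s|=1$ whenever $R_s\cup M_s\neq\emptyset$.
\item $s'<s$. Chain monotonicity gives $t_k \prec t_j$, since $t_j$ is a writer of the later slot $s$.
\item $s'>s$. Chain monotonicity applied to $t_i$, a member of slot $s$, gives $t_i \prec t_k$. There is one subtlety: if $t_i \in M_s$, then $t_i$ itself is the writer of slot $s+1$. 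But $t_k\neq t_i$ and $|W_{s+1}|=1$ whenever that slot has readers or rmw; in any case the bound $t_i \prec t_k$ for $s'\ge s+1$ still follows, because $t_i$, as a member of $W_{s+1}$, precedes every writer of slots beyond $s+1$ by the same induction. When $s'=s+1$ and $t_k \in W_{s+1}$ with $t_k\ne t_i$, this can only happen if $t_i\in R_s$ (since $M_s=\{t_i\}$ would force $W_{s+1}=\{t_i\}$ if that slot is read, and otherwise we need the cross-slot rule). Here we use $R_s \prec W_{s+1}$ when $M_s=\emptyset$.
\end{enumerate}

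The main obstacle is exactly this boundary case: the rmw slot $s$ with $M_s=\{t_i\}$ and a blind writer $t_k$ joining $W_{s+1}$ alongside $t_i$. The model's invariants must exclude $W_{s+1}\supsetneq\{t_i\}$, or else ensure $t_i\prec t_k$. I would first try to derive this from the invariant that a slot with an rmw has $W_{s+1}=\{t\}$ exactly, reading the model's ``then slot $s+1$ exists with $W_{s+1}=\{t\}$'' as an equality. If the model only gives containment, I would instead strengthen the cross-slot rule in the formal model so that $M_s \prec W_{s+1}\setminus M_s$.

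Beyond that case, the remaining work is the routine induction for monotonicity and the bookkeeping that every access of $x$ appears in the chain of $x$. This is exactly what makes \Vrule{} hold by construction: all endpoints and competing writers of $x$ lie on the chain for $x$, so they are pairwise comparable in the required way.
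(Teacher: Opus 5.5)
Your proposal is correct and follows the paper's argument. \Crule{} comes from the intra-slot order $W_s \prec R_s \cup M_s$. For \Vrule{}, $|W_s|=1$ forces the competing writer into a slot $s' \neq s$; then $s'<s$ gives $t_k \prec t_j$ and $s'>s$ gives $t_i \prec t_k$ via the slot ordering.

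Your ``main obstacle'' is already closed by the model as the paper states it: $M_s=\{t\}$ forces $W_{s+1}=\{t\}$ as an equality, so no strengthening of the model is needed. You should, however, restate chain monotonicity so that the rmw member of $M_s$ is only claimed to precede writers of slots beyond $s+1$. That member is itself the writer of slot $s+1$, so your lemma as literally stated would give $t \prec t$.
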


The proof verifies \CVrules{} using the serialization order:
\begin{itemize}
\item \textbf{\Crule{}}: $t_j \in W_s$ and $t_i \in R_s \cup M_s$ for some slot $s$; the intra-slot ordering gives $t_j \prec t_i$.
\item \textbf{\Vrule{}}: Since $|W_s| = 1$, $t_k$ appears in slot $s' \neq s$. If $s' < s$, the slot ordering gives $t_k \prec t_j$; if $s < s'$, it gives $t_i \prec t_k$.
\end{itemize}

\section{Mechanization in Lean}
\label{sec:mechanization}

All theorems in this paper are mechanically verified in Lean (version 4.31.0).
The artifact, including build instructions and a mapping from paper theorems to Lean definitions, is available as supplementary material~\cite{Hosh2026-CVRules-Artifact}.
The proof development comprises about 11.2K lines of code across 105 modules, with zero axioms and zero admitted goals. ``Zero axioms'' means our development adds no axioms beyond those already provided by Lean's standard library; there are no \texttt{axiom} declarations of our own. ``Zero admitted goals'' means every proof obligation is fully discharged; there are no gaps marked with \texttt{sorry}. Together, these properties ensure that anyone can independently verify the correctness of our theorems by running the Lean type checker.

\subsection{Proof Statistics}
\label{sec:proof-statistics}

\cref{tab:metrics} summarizes the proof development. The core theory includes \CVrules{} definitions, the characterization theorem, order-theoretic lemmas, and version function correctness. The bounded-width component formalizes Dilworth's theorem and the state-space search decision procedure. Protocol implementations vary in complexity: MVTO is the largest because it explicitly formalizes version selection from a multi-version store, while Aria is the smallest.

\begin{table}[t]
\centering
\small
\caption{Proof development metrics (Defs = definitions, LOC = lines of code)}
\label{tab:metrics}
\begin{tabulary}{\linewidth}{Lrrrr}
\hline
Component & Modules & Defs & Lemmas & LOC(\%) \\
\hline
Core Theory & 16 & 49 & 114 & 2,145(\phantom{0}19.2) \\
MVSG Equivalence & 3 & 18 & 11 & 538(\phantom{00}4.8) \\
Bounded Width & 9 & 29 & 69 & 1,579(\phantom{0}14.2) \\
Protocol Common & 4 & 3 & 15 & 150(\phantom{00}1.3) \\
2PL & 13 & 25 & 48 & 1,220(\phantom{0}10.9) \\
MVTO & 22 & 23 & 88 & 2,152(\phantom{0}19.3) \\
SSN & 14 & 24 & 55 & 1,120(\phantom{0}10.0) \\
Aria & 10 & 24 & 48 & 991(\phantom{00}8.9) \\
SnapChain & 14 & 32 & 58 & 1,258(\phantom{0}11.3) \\
\hline
\textbf{Total} & \textbf{105} & \textbf{227} & \textbf{506} & \textbf{11,153}\textnormal{(100.0)} \\
\hline
\end{tabulary}
\end{table}

\subsection{Implementation Notes}
\label{sec:implementation-notes}

A notable implementation detail is representing SSN's logical timestamps $\rho(t) = (\pi(t), -\mathit{ct}(t))$ as a lexicographic pair of natural numbers with the order on the second component reversed. Keeping both components in~$\mathbb{N}$ avoids non-negativity invariants and lets standard arithmetic tactics apply directly, which simplifies the proofs. The injectivity of~$\rho$ (needed to ensure the serialization order is well-defined) then reduces to the injectivity of commit timestamps.

We adopted the decomposed \Vrule{} approach (\cref{thm:v-rule-decomposition}) for 2PL, MVTO, and SSN. For Aria, the batch position of the competing writer relative to the reader and the snapshot source uniquely determines which disjunct of \Vrule{} holds in each case, making decomposition unnecessary. For 2PL and MVTO, the decomposition automatically determines the case splits but increases the number of cases compared to a non-decomposed proof; however, the additional cases are trivial (five cases for 2PL, four for MVTO, each a few lines), so the overhead is negligible. For SSN, the decomposition conversely reduces the number of cases from six to four, because the hypotheses of the no-interposition obligation cover and merge case distinctions that a non-decomposed proof must handle separately. SnapChain instead verifies \Vrule{} directly from its view-chain slot structure, without the decomposition.

The bounded-width tractability result (\cref{sec:bounded-width}) required formalizing Dilworth's theorem and the entire state-space search pipeline as a verified decision procedure. The mechanization establishes that reachability in the state space is equivalent to serializability (soundness and completeness) and that reachability is decidable, with search termination proved via a monotonically decreasing measure on unplaced transactions. The complexity claims are informal, since formalizing complexity bounds in Lean remains a significant undertaking.

\section{Related Work}
\label{sec:related-work}

\subsection{Serializability Characterizations}
The classical theory of serializability developed from single-version to
multi-version settings, and from conflict-based to view-based definitions.
For single-version histories, conflict serializability, characterized by
acyclicity of the serialization graph~\cite{Eswa1976-2PL}, is decidable in polynomial
time, while view serializability is
NP-complete~\cite{Papa1979}.
Bernstein et al.~\cite{Bern1983,Bern1987-book} extended these ideas to multi-version
histories with the Multi-Version Serialization Graph (MVSG), where a history
is serializable if and only if some version order makes the MVSG acyclic.
These graph-based approaches characterize serializability via acyclicity of a dependency graph, without requiring a serialization order as input.
Biswas and Enea~\cite{Bisw2019} systematically studied the complexity of checking consistency models including serializability, showing that it is decidable in polynomial time for any fixed bound on the number of sessions.

More recent frameworks adopt declarative specifications.
Cerone et al.~\cite{Cero2015} proposed an axiomatic framework using
visibility (VIS) and arbitration (AR) relations, defining a hierarchy
from Read Atomic to Serializability. AR is a total order at all levels;
VIS varies by isolation level, becoming total and equal to AR at
serializability.
Cerone and Gotsman~\cite{Cero2016} extended this to characterize
Snapshot Isolation via dependency graphs, with applications to
transaction chopping and robustness~\cite{Bern2016}.
Crooks et al.~\cite{Croo2017-Seeing} introduced state-based definitions where
executions form a \emph{totally ordered sequence of states}.
Xiong et al.~\cite{Xion2020} developed operational semantics for
distributed key-value stores with execution tests, where the derived
transaction order can be partial; however, their model assumes
last-write-wins (LWW), making it difficult to model protocols like
MVTO (timestamp-based version selection) or Aria (batch-snapshot-based reading).

Our \CVrules{} characterization differs in two key aspects:
(1)~we work with \emph{partial orders} rather than total orders,
unlike Cerone's framework where AR's totality anchors the most-recent-writer rule of its external-consistency axiom (EXT);
\Vrule{}'s disjunctive structure instead localizes comparability to each read-from edge,
allowing protocols like 2PL to use their natural lock conflict order directly;
(2)~we provide \emph{per-read conditions} (verifying \Crule{} and \Vrule{}
per read-from edge) with an explicit transaction order, rather than
graph acyclicity conditions~\cite{Bern1983,Cero2016}.
Moreover, the C/V separation reveals the computational structure of serializability.

\subsection{Protocol Verification}
Concurrency control protocols can be classified by how they construct
serialization order: \emph{lock-based} protocols (2PL~\cite{Eswa1976-2PL}, Silo~\cite{Tu2013-Silo}, Orthrus~\cite{Renk2016-Orthrus}, 2PLSF~\cite{Rama2023-2PLSF}) derive
order from lock conflicts; \emph{timestamp-based} protocols (MVTO~\cite{Reed1978-MVTO}, TicToc~\cite{Yu2016-TicToc},
Cicada~\cite{Lim2017-Cicada}) use timestamps directly; \emph{graph-based} protocols (IC3~\cite{Zhao2016-IC3}, Janus~\cite{Mush2016-Janus}, Oze~\cite{Nemo2025-Oze})
infer order via topological sort of acyclic dependency graphs.
SSN~\cite{Wang2017-SSN} and Aria~\cite{Lu2020b-Aria} are notably different: they define conditions for
committing transactions without explicit order construction, leaving
the underlying serialization order implicit.
Classical protocols like 2PL and MVTO have well-established correctness
proofs in textbooks~\cite{Bern1987-book,Weik2002-book}.

To our knowledge, this paper provides the first explicit order constructions for SSN and Aria, within a unified framework that also covers 2PL, MVTO, and SnapChain.

\subsection{Formal and Automated Verification}
Mechanized verification of concurrency control varies in
\emph{what} is verified and \emph{how} serializability is established.
\emph{Implementation verification:}
vMVCC~\cite{Chan2023} and DaisyNFS~\cite{Chaj2022} prove that concrete
implementations correctly realize a transactional specification
using Iris/Perennial in Coq; these proofs are protocol-specific
and tightly coupled to implementation details.
\emph{Abstract protocol models:}
Chkliaev et al.~\cite{Chkl1999} verified 2PL in PVS via
conflict-preserving timestamps;
Lesani et al.~\cite{Lesa2012} and Doherty et al.~\cite{Dohe2013}
verified STM in PVS using I/O automata, requiring \emph{total order}
for committed transactions.
C4~\cite{Lesa2022-C4} formulates serializability as linearizability;
VerIso~\cite{Ghas2025} proves refinement to abstract serial schedules in Isabelle/HOL,
discovering a bug in TAPIR~\cite{Zhan2018-TAPIR},
though it inherits LWW limitations from Xiong et al.~\cite{Xion2020}.
Mathiasen et al.~\cite{Math2025} specify weak isolation levels
for client reasoning rather than characterizing serializability.
\emph{Runtime checking:}
Cobra~\cite{Tan2020a-Cobra}, Elle~\cite{King2020-Elle}, and their
successors~\cite{Gulo2024-IsoVista,Sunw2025-Vbox} check recorded traces
post hoc rather than proving correctness for all executions.

\CVrules{} directly characterize serializability, as both
necessary and sufficient conditions, via per-read conditions on
explicit partial orders. They differ from the approaches above
in three respects: they abstract away implementation details
(cf.\ vMVCC, DaisyNFS); they prove correctness for all executions
rather than verifying individual ones (cf.\ Cobra, Elle); and
they require neither total orders~\cite{Lesa2012,Dohe2013} nor
LWW semantics~\cite{Xion2020}.

\section{Conclusion}
\label{sec:conclusion}

We presented \CVrules{}, a characterization of serializability via two
local per-read conditions (\Crule{} and \Vrule{}) on an explicit
transaction order, and proved their equivalence with MVSG acyclicity.
We also showed that any fixed bound on the width of the order forced by \Crule{} yields polynomial-time decidability.
Using this framework, we verified 2PL, MVTO, SSN, Aria, and SnapChain,
identifying explicit order constructions for SSN and Aria
whose original papers provided only certification conditions.
Furthermore, we found that Aria's unique-write constraint is unnecessary for serializability,
and we used \CVrules{} as a design principle for SnapChain.
All results except the complexity bounds are mechanized in Lean (${\sim}11.2\text{K}$ lines).

Future work includes applying \CVrules{} as a design principle to develop protocols with greater scheduling power and lower overhead.
Another direction is bridging the gap between scheduling power and deployment performance by extending the verification framework to incorporate abort handling and workload models.

\bibliography{references}


\end{document}